\documentclass[runningheads]{llncs}
\usepackage[T1]{fontenc}

\usepackage{amsmath,amssymb,stmaryrd,multicol}

\usepackage[all,cmtip]{xy}

\usepackage[shortlabels]{enumitem}

\newcommand{\program}[1]{\mathtt{#1}}
\newcommand{\At}{\mathsf{At}}
\newcommand{\LL}{\mathcal{L}}
\newcommand{\prog}{\mathsf{p}}
\newcommand{\form}{\mathsf{f}}
\newcommand{\Val}{\mathsf{Val}}
\newcommand{\otto}{\leftrightarrow}
\newcommand{\FL}{\mathsf{FL}}
\newcommand{\canon}{\mathsf{c}}
\newcommand{\pda}{\mathord{\downarrow}}
\newcommand{\can}{\mathsf{can}}
\newcommand{\GG}{\mathbf{G}}
\newcommand{\PEDAL}{\mathrm{PE}}
\newcommand{\dmff}{\mathsf{DMFF}}

\newtheorem{cor}[theorem]{Corollary}

\begin{document}
\title{Probabilistic Epistemic Dynamic Agentive Logic}

\author{Shay Allen Logan\inst{1}\orcidID{0000-0003-4066-6619}}
%
%\authorrunning{F. Author et al.}

\institute{Kansas State University, Manhattan KS 66506, USA
\email{salogan@ksu.edu}}
\maketitle              % typeset the header of the contribution
\begin{abstract}
I introduce PEDAL---a probabilistic epistemic logic meant to capture, in propositional dynamic terms, the epistemic state of an agent engaged in checking whether a program meets its specification. Semantically, PEDAL is built `on top of' PDL and uses probability measures defined on the set of possible program valuations of an otherwise-specified PDL-model. A Hilbert system with one infinitary rule is provided and proved to be sound and complete. Near the end, I discuss possible ways to circumvent infinitary proof difficulties.

\keywords{Dynamic Logic  \and Probabilistic Logic \and Epistemic Logic.}
\end{abstract}
\section{Introduction} 
	
	To settle some vocabulary recall that where $\At_\form$ and $\At_\prog$ are countably infinite sets of atomic formulas and atomic programs, respectively, the language, $\LL$, of propositional dynamic logic (PDL; see e.g. \cite{fischer1979propositional,harel2000dynamic}) is defined by simultaneous recursion on formulas $\gamma$ and program terms $\pi$ as follows:
	\begin{displaymath}
		\gamma := a\in\At_\form \mid \bot \mid \neg\gamma \mid \gamma\lor\gamma \mid \langle\pi\rangle\gamma\qquad
    	\pi := \program{p}\in\At_\prog \mid \pi;\pi \mid \pi\cup\pi \mid \pi^* \mid~?\gamma
	\end{displaymath}
	Other connectives are defined in the expected ways. Models for PDL are triples $\langle S,v_\form,v_\prog\rangle$ where $S$ is a nonempty set understood to represent the states of some machine, $v_\form$ is a function from $\At_\form$ to $2^S$ understood to map atomic formulas to the states where they are true, and $v_\prog$ is a function from $\At_\prog$ to $2^{S\times S}$ understood to map atomic programs to their transition relations. We let $\rho_1\circ \rho_2$ name the composition of relations $\rho_1$ and $\rho_2$ and $\rho^i$ name the $i$-fold composition of $\rho$ with itself, understood to be the identity relation when $i=0$.
	
	A `true at a state in a model' relation $\vDash$ and a function $\llbracket-\rrbracket$ mapping program terms to transition relations are then defined as follows: 

	\begin{multicols}{2}\begin{itemize}
		\item $M,s\vDash a$ iff $s\in v_\form(a)$ for $a\in\At_\form$.
		\item $M,s\not\vDash\bot$
		\item $\llbracket\program{p}\rrbracket=v_\prog(\program{p})$ for $\program{p}\in\At_\prog$.    
		\item $M,s\vDash \gamma_1\lor \gamma_2$ iff $M,s\vDash \gamma_1$ or $M,s\vDash \gamma_2$. 
		\item $M,s\vDash \neg \gamma$ iff $M,s\not\vDash \gamma$. 
		\item $\llbracket\program{\pi_1;\pi_2}\rrbracket=\llbracket\program{\pi_1}\rrbracket\circ\llbracket\program{\pi_2}\rrbracket$. 
		\item $\llbracket\program{\pi_1\cup \pi_2}\rrbracket=\llbracket\program{\pi_1}\rrbracket\cup\llbracket\program{\pi_2}\rrbracket$. 
		\item $\llbracket\program{\pi^*}\rrbracket=\cup_{i=0}^\infty\llbracket\pi\rrbracket^i$. 
		\item $\llbracket\program{?\gamma}\rrbracket=\{\langle s,s\rangle: M,s\vDash \gamma\}$. 
		\item $M,s\vDash \langle\pi\rangle\gamma$ iff for some $t$, $\langle s,t\rangle\in\llbracket\program{\pi}\rrbracket$ and $M,t\vDash \gamma$.
	\end{itemize}\end{multicols}

	%We note that it follows from these that $\program{\pi_1;\pi_2}$ represents a program that first executes $\program{\pi_1}$, then executes $\program{\pi_2}$ on the output of $\program{\pi_1}$, that $\program{\pi_1\cup \pi_2}$ represents a program that (nondeterministically) either executes $\program{\pi_1}$ or executes $\program{\pi_2}$, that $\program{\pi^*}$ represents a program that (nondeterministically) executes $\program{\pi}$ zero or more times, and that $\program{?\gamma}$ represents the test-whether-$\gamma$ program that terminates immediately in states where $\gamma$ is true and otherwise fails. Sentences of the form `$[\program{\pi}]\gamma$' and `$\langle\program{\pi}\rangle\gamma$' then represent, respectively, `$\program{\pi}$ terminates only in $\gamma$-states' and `$\program{\pi}$ terminates in at least one $\gamma$-state'.
    
    The set of \emph{valid}---which is to say always everywhere true---formulas of PDL has been axiomatized in many places, e.g. \cite{harel2000dynamic,kozen1981elementary}. Here is one such axiomatization: 
    
    \begin{multicols}{2}\begin{itemize}[leftmargin=*]%\setlength
    	\item Propositional tautologies.
		\item $[\program{\pi}](\gamma_1\to\gamma_2)\to([\program{\pi}]\gamma_1\to[\program{\pi}]\gamma_2)$
		\item $[\program{\pi}](\gamma_1\land\gamma_2)\otto([\program{\pi}]\gamma_1\land[\program{\pi}]\gamma_2)$
		\item $[\program{\pi_1\cup\pi_2}]\gamma\otto([\program{\pi_1}]\gamma\land[\program{\pi_2}]\gamma)$
		\item $[\program{?A}]B\otto(A\to B)$
		\item $[\pi_1;\pi_2]\gamma\otto[\pi_1][\pi_2]\gamma$
		\item $(\gamma\land[\pi][\pi^*]\gamma)\otto[\pi^*]\gamma$
		\item $(\gamma\land[\pi^*](\gamma\to[\pi]\gamma))\to[\pi^*]\gamma$
		\item $\gamma,\gamma\to\psi\Rightarrow\psi$
		\item $\gamma\Rightarrow[\pi]\gamma$
    \end{itemize}\end{multicols}
    Given a set of formulas in the language of PDL, we write $\Gamma\vdash\gamma$ when there is a sequence $\beta_1,\dots,\beta_n=\gamma$ such that for $1\leq i\leq n$, either $\beta_i\in\Gamma$, $\beta_i$ is an axiom, or $\beta_i$ follows from previous members of the sequence by one of the two PDL-rules. We write $\vdash\gamma$ when $\gamma$ is a theorem of PDL.
	\begin{lemma}\label{provability_facts}
		All of the following hold:
		\begin{multicols}{2}\begin{enumerate}[(i)]
			\item If $\vdash\gamma\to\bot$, then $\vdash\langle\pi\rangle\gamma\to\bot$. 
			\item If $\Gamma\cup\{\gamma_1\}\vdash\gamma_2$, then $\Gamma\vdash\gamma_1\to\gamma_2$
			\item If $\Gamma\cup\{\gamma\}\vdash\bot$, then $\Gamma\vdash\neg\gamma$
			\item $\vdash\langle\pi_1\cup\pi_2\rangle\gamma\otto(\langle\pi_1\rangle\gamma\lor\langle\pi_2\rangle\gamma)$
			\item $\vdash\langle\pi^*\rangle\gamma\otto(\gamma\lor\langle\pi\rangle\langle\pi^*\rangle\gamma)$
			\item $\vdash\langle\program{?\gamma_1}\rangle\gamma_2\otto(\gamma_1\land \gamma_2)$
			\item $\vdash\langle\pi\rangle(\gamma_1\lor\gamma_2)\otto(\langle\pi\rangle\gamma_1\lor\langle\pi\rangle\gamma_2)$
			\item $\vdash\langle\pi^n\rangle\gamma\to\langle\pi^*\rangle\gamma$ for any $n\geq 1$, where $\langle\pi^1\rangle=\langle\pi\rangle$ and $\langle\pi^{n+1}\rangle=\langle\pi;\pi^n\rangle$.
			\item $\vdash\langle\pi_1;\pi_2\rangle\gamma\otto\langle\pi_1\rangle\langle\pi_2\rangle\gamma$
		\end{enumerate}\end{multicols}
	\end{lemma}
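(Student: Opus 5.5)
These are standard PDL facts; the plan is to derive each syntactically from the listed axioms, using $\langle\pi\rangle\gamma := \neg[\pi]\neg\gamma$ and propositional reasoning throughout. Two preliminary derived rules come first.

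The first is monotonicity: if $\vdash\gamma_1\to\gamma_2$ then $\vdash[\pi]\gamma_1\to[\pi]\gamma_2$. This follows from necessitation followed by the K axiom. Taking contrapositives gives the dual, $\vdash\langle\pi\rangle\gamma_1\to\langle\pi\rangle\gamma_2$.

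The second is substitution of provable equivalents under $[\pi]$ and $\langle\pi\rangle$, which is obtained by applying monotonicity in both directions.

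With these in hand, (i) is immediate. From $\vdash\gamma\to\bot$ we get $\vdash\neg\gamma$, hence $\vdash\neg\neg\gamma$ up to equivalence, so necessitation yields $\vdash[\pi]\neg\gamma$, which is $\vdash\neg\langle\pi\rangle\gamma$.

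The dual facts are handled by negating the box axioms and pushing $\neg$ through.
\begin{itemize}
\item For (iv), negate the $\cup$-axiom instantiated at $\neg\gamma$.
\item For (vi), use $[?\gamma_1]\neg\gamma_2\otto(\gamma_1\to\neg\gamma_2)$ and negate.
\item For (ix), use the composition axiom at $\neg\gamma$ together with substitution of $\neg\neg\langle\pi_2\rangle\gamma$ for $[\pi_2]\neg\gamma$'s negation inside $[\pi_1]$.
\item For (v), apply the same maneuver to the first $*$-axiom at $\neg\gamma$, again using substitution inside $[\pi]$ to rewrite $[\pi^*]\neg\gamma$ as $\neg\langle\pi^*\rangle\gamma$.
\item For (vii), negate the $\land$-distribution axiom at $\neg\gamma_1,\neg\gamma_2$, using De Morgan under $[\pi]$ via substitution.
\end{itemize}

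Item (viii) goes by induction on $n$. The base case is $\langle\pi\rangle\gamma\to\langle\pi\rangle\langle\pi^*\rangle\gamma$: it follows from $\vdash\gamma\to\langle\pi^*\rangle\gamma$ (by (v)) and dual monotonicity, and then from (v) again. For the inductive step, $\langle\pi;\pi^n\rangle\gamma\otto\langle\pi\rangle\langle\pi^n\rangle\gamma$ by (ix). The induction hypothesis and dual monotonicity then give $\langle\pi\rangle\langle\pi^*\rangle\gamma$, and (v) gives $\langle\pi^*\rangle\gamma$. The induction axiom is not needed.

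Items (ii) and (iii) form the deduction theorem, and this is the main obstacle, because necessitation must not be applied to hypotheses. The proof proceeds by induction on the length of a derivation from $\Gamma\cup\{\gamma_1\}$, with the following cases.
\begin{itemize}
\item Axioms and members of $\Gamma$ become $\gamma_1\to\beta$ by a tautology.
\item $\gamma_1$ itself gives $\gamma_1\to\gamma_1$.
\item Modus ponens is handled by the tautology $(\gamma_1\to(\beta\to\psi))\to((\gamma_1\to\beta)\to(\gamma_1\to\psi))$.
\item Necessitation needs care. I would read the definition of $\Gamma\vdash$ as permitting necessitation only on theorems, i.e.\ with premises that are PDL-theorems, as is standard for the global/local distinction. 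Then $[\pi]\beta$ is itself a theorem, and $\gamma_1\to[\pi]\beta$ follows propositionally.
\end{itemize}
If the paper's definition literally allows necessitation on hypotheses, (ii) fails (for example $a\vdash[\pi]a$ but $\nvdash a\to[\pi]a$). In that case I would make the restriction explicit, since only the theorem-level reading is needed later.

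Finally, (iii) follows from (ii) with $\gamma_2=\bot$ and the tautology $(\gamma\to\bot)\to\neg\gamma$.
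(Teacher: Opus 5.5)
The paper gives no proof of this lemma; it is presented as a list of standard PDL facts. So there is no route to compare against, and your proposal has to be judged on its own. Given the conventions you fix at the outset, your derivations are correct. This covers monotonicity and substitution, items (i) and (iv)--(vii) and (ix) by negating the box axioms, the induction for (viii) without the induction axiom, and the deduction-theorem argument for (ii) and (iii).

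Your diagnosis of (ii) is right. Under the paper's literal definition of $\Gamma\vdash\gamma$, necessitation may be applied to members of $\Gamma$, so $\{a\}\vdash[\pi]a$ while $\not\vdash a\to[\pi]a$. The local reading you adopt is also the one the paper's canonical-model arguments need. Under the global reading, $W_1\land\langle\program{p}\rangle W_2\vdash\bot$ for all distinct $W_1,W_2\in S(\Gamma)$: derive $W_1$, necessitate to $[\program{p}]W_1$, and obtain $\langle\program{p}\rangle(W_1\land W_2)$. Then $v^\canon_\prog(\program{p})$ would be contained in the identity. A small slip: in (i), the step `hence $\vdash\neg\neg\gamma$' should be dropped, since necessitation is applied to $\neg\gamma$ directly.

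There is a second repair hidden in your opening convention, and you should flag it as explicitly as the first. The paper's grammar takes $\langle\pi\rangle$ as primitive, so $[\pi]\gamma$ abbreviates $\neg\langle\pi\rangle\neg\gamma$, and your definition $\langle\pi\rangle\gamma:=\neg[\pi]\neg\gamma$ is not available there. This is not a harmless change of notation. Every modal operator introduced by the listed axiom schemas or by necessitation has the form $\neg\langle\pi\rangle\neg$. So reinterpret $\langle\pi\rangle\varphi$ as true at every state whenever $\varphi$ is not syntactically a negation, keeping all other clauses, including the one for tests. Under this reinterpretation every PDL theorem remains true at every state of every model, yet $\langle\pi\rangle\bot\to\bot$ fails.

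Hence in the paper's literal system your dual monotonicity yields only $\langle\pi\rangle\neg\neg\gamma_1\to\langle\pi\rangle\neg\neg\gamma_2$. Worse, item (i) with $\gamma=\bot$ is not derivable at all. The lemma becomes true once you either take $[\pi]$ as primitive, as you do, or add the duality axiom $\langle\pi\rangle\gamma\otto\neg[\pi]\neg\gamma$ to the paper's system. With that repair stated, the rest of your proposal goes through.
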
 

\section{A Motivating Example}

	Throughout we take ourselves to be discussing the beliefs of a particular agent we call Anne. To set the scene, imagine that Anne has been investigating the behavior of programs $\program{p}$ and $\program{q}$ and has come to tentatively believe the following:
	\begin{enumerate}[(i)]
		\item Whenever $A$ is true, a successful execution of $\program{p}$ will make $B$ true, and
		\item Whenever $C$ is true, a successful execution of $\program{q}$ will make $D$ true.
	\end{enumerate}
 	Suppose in fact that Anne is 60\% certain that (i) is true and 60\% certain that (ii) is true. Using the language of PDL, these are beliefs about the probability of $A\to[\program{p}]B$ being true and about the probability of $C\to[\program{q}]D$ being true. 
	
	Now let $\program{r}$ be the program $\program{(?A;p)\cup(?C;q)}$. Intuitively, $\program{r}$ acts like $\program{p}$ when $A$ is true and acts like $\program{q}$ when $C$ is true. Consider the following claim:
	\begin{enumerate}
		\item[(iii)] A successful execution of $\program{r}$ will make $B\lor D$ true.
	\end{enumerate}
	Here is a natural question that is at the heart of the work in this paper: Given that Anne is 60\% certain that (i) is true and 60\% certain that (ii) is true, how certain should she be that (iii) is true?
	
	As it turns out, absent further information Anne ought not be more than 20\% sure that (iii) is true. This is a surprising answer that will be justified below. Given that it \emph{is} surprising, though, it would be nice---in situations where judgments like these matter---to have a bit more than intuition to run on. More generally, what we might want is a rigorous, provably valid way of turning quantified credences concerning propositions like $A\to[\program{p}]B$ and $C\to[\program{q}]D$ into credences about more complex propositions like $[\program{(?A;p)\cup(?C;q)}](B\lor D)$. 
		
	%I've used the language of PDL to set up and describe this problem. But PDL itself is not helpful when it comes to answering the question for a very simple reason: PDL doesn't deal in probabilities. This a serious obstacle to applications of dynamic logics generally because \emph{proof} is usually prohibitively expensive. But many of our verification-related practices---e.g. various sorts of testing and code reviews and the like---are very nearly explicitly designed to provide us with well-informed expert judgments of the probability that a piece of code meets a particular specification. So while we \emph{have access to} what looks plausibly like probabilistic information about the behavior of programs, what we lack is a way of rigorously manipulating such information.
	
	\textbf{What I am proposing in this paper is a formal system that can be used to rigorously reason \emph{from} judgments about the probability that parts of a program meet certain specifications (expressed in PDL) \emph{to} judgments about the probability that a program meets a combined specification (also expressed in PDL).}

	The logic I provide, PEDAL, suffers from one \emph{major} drawback: it has an infinitary inference rule. While I suspect that this can be overcome, I will also note that the one infinitary rule that PEDAL uses is one that humans are \emph{often} pretty good at using. And even sans implementation, the fact that we can use PEDAL to sharpen our instincts about probabilistic reasoning with respect to program specifications seems to make further investigation of PEDAL worthwhile.
	
	%The paper is organized as follows. Section 3 gives an overview of some related work. Section 4 describes an extension of PDL with an `unlabeled' modality $\Box$, which is necessary in order to apply it in the ways outlined above. Section 5 then introduces probabilistic vocabulary and semantics. Section 6 axiomatizes the probabilistic validities and proves soundness and completeness. The axiomatization provided in this section relies on an infinitary rule. In section 7 we introduce a family of finitary restrictions of the system provided in Section 6 that, for almost all practical purposes, will suffice. The final section concludes. 

	The work done here differs \emph{philosophically} from other probabilistic treatments of dynamic logic. What PEDAL models are models \emph{of} is the state of affairs that obtains when one has some program on hand that definitely does behave in some way or other, but one is not quite sure which way. Perhaps, for example, the program is simply too complex to survey in its entirety, or to keep every detail in mind. So some facts about how it behaves are simply not present in one's mind. Or perhaps instead, one has on hand a program someone else entirely wrote, and one only has their descriptions of its behavior to go on.
	
	In situations like these, there is some fact of the matter about how the program behaves. Probabilistic reasoning is called for only because of the epistemic limitations of the agent doing the reasoning. Thus, what we are modeling here is how one ought to rationally reason from one's \emph{credences} regarding the parts of a program meeting certain specifications to \emph{credences} about the program as a whole meeting a certain specification. Situations like these are ubiquitous. Full formal verification is \emph{expensive}, so we usually settle on other means---testing of various sorts, code reviews, etc.---to fill the gap. These methods leave us \emph{more} convinced that programs behave as we want without ever \emph{fully} convincing us. We're left, in other words, with beliefs best captured by probabilities.
	
	This philosophical difference is reflected in the main formal difference between this work and other work on probabilistic dynamic logic: the probabilities here are imposed `from the outside' rather than from within a given model. More to the point, probabilistic judgments are made by placing a probability measure on (roughly) the class of all ways programs might be evaluated within a given Kripke model. This reflects the epistemic situation mentioned above: we have a program on hand that behaves in some way or other (that is, which is interpreted by some program valuation or other) but we don't know exactly which way (which is to say, we're not certain \emph{exactly which} program valuation is the correct one).
			   
\section{Reasoning About Validity}
    
	Our interest is in modeling Anne's judgments of the form `I am $x$\% certain that program $\pi$ behaves in thus and such way'. But such judgments are best modeled in PDL as judgments about whether certain formulas are \emph{valid} rather than as judgments about whether certain formulas are \emph{true}. As an example, if Anne reports that she is 75\% sure that we will end up in a state satisfying $B$ whenever we execute $\pi$ in a state satisfying $A$, then what she is reporting is that she is 75\% sure that $A\to[\pi]B$ is true \emph{everywhere} in the model that represents the system she expects to execute $\pi$ on. 
	
	So before we extend with probabilistic modals, we first need to augment PDL with an `is true everywhere' modal that we represent with an empty box `$\Box$'. The resulting system will be called PDL$^\Box$. 
%	The language of PDL$^\Box$ is, like the language of PDL, defined by simultaneous recursion on formulas $\gamma$ and programs $\pi$, this time with the following BNF:
%	\begin{displaymath}
%		\gamma := a\in\At_\form \mid \bot \mid \neg\gamma \mid \gamma\lor\gamma\mid \Box\gamma \mid \langle\pi\rangle\gamma\qquad
%    \pi := \program{p}\in\At_\prog \mid \pi;\pi \mid \pi\cup\pi \mid \pi^* \mid~?\gamma
%    \end{displaymath}
   
	What we would like from our semantics is for it to interpret everything in PDL$^\Box$ in the same way it is interpreted in PDL and for it to in addition interpret the modal $\Box$ as universal truth by saying that $M,s\vDash \Box A\text{ iff }M,t\vDash A\text{ for all }t$.
	
	But the metatheoretic work is greatly simplified by instead defining a PDL$^\Box$-model to be tuple $\langle S,v_{\form},v_{\prog},R_\Box\rangle$ with $\langle S,v_\form,v_\prog\rangle$ a PDL-model and $R_\Box$ an equivalence relation on $S$ such that for all $\program{p}\in\At_{\prog}$, $v_{\prog}(\program{p})\subseteq R_\Box$. We then use $R_\Box$ to interpret $\Box$ in the usual way: $M,s\vDash \Box A\text{ iff }M,t\vDash A\text{ for all }\langle s,t\rangle\in R_\Box$. We say that $\gamma$ is valid just if for all models $M$ and all states $s$, $M,s\vDash\gamma$.
	 	 
	Axiomatically, PDL$^\Box$ is characterized by adding the following to the axioms for PDL:	
	\begin{multicols}{2}\begin{itemize}[leftmargin=*]
    	\item $\Box(\gamma_1\to\gamma_2)\to(\Box \gamma_1\to\Box\gamma_2)$
		\item $\Box \gamma\to \gamma$
		\item $\Diamond \gamma\to\Box\Diamond \gamma$
		\item $\langle\pi\rangle \gamma\to\Diamond \gamma$
		\item $\gamma\Rightarrow\Box\gamma$
    \end{itemize}\end{multicols}
    
    %For completeness, we state aloud that `$\Diamond$' is an abbreviation of `$\neg\Box\neg$'. 
    It is a fact---a fact whose proof is entirely standard and thus omitted---that PDL$^\Box$ is sound for PDL$^\Box$-models.

	The completeness theorem is also mostly standard but is nonetheless worth going through since we will need to make reference to it in the course of the completeness proof for our target system. To begin, given a set of PDL$^\Box$-formulas $\Gamma$, define the Fischer-Ladner closure $\FL(\Gamma)$ of $\Gamma$ to be the smallest set such that
    \begin{itemize}
    	\item $\Gamma\subseteq\FL(\Gamma)$;
		\item If $\gamma_1\in\FL(\Gamma)$ and $\gamma_2$ is a subformula of $\gamma_1$, then $\gamma_2\in\FL(\Gamma)$;
		\item If $\langle\program{?\gamma_1}\rangle\gamma_2\in\FL(\Gamma)$, then $\gamma_1\in\FL(\Gamma)$;
		\item If $\langle\pi_1;\pi_2\rangle\gamma\in\FL(\Gamma)$, then $\langle\pi_1\rangle\langle\pi_2\rangle\gamma\in\FL(\Gamma)$;
		\item If $\langle\pi_1\cup\pi_2\rangle\gamma\in\FL(\Gamma)$, then $\langle\pi_1\rangle\gamma\in\FL(\Gamma)$ and $\langle\pi_2\rangle\gamma\in\FL(\Gamma)$; and
		\item If $\langle\pi^*\rangle\gamma\in\FL(\Gamma)$, then $\langle\pi\rangle\gamma\in\FL(\Gamma)$ and $\langle\pi\rangle\langle\pi^*\rangle\gamma\in\FL(\Gamma)$. 
	\end{itemize}
	We define $\FL^{\pm}(\Gamma)=\FL(\Gamma)\cup\{\neg\gamma:\gamma\in\FL(\Gamma)\}$ and let $S(\Gamma)$ be the set of maximal consistent subsets of $\FL^{\pm}(\Gamma)$
	%---so if $W\in S(\Gamma)$, then $W\subseteq\FL^{\pm}(\Gamma)$, $W\not\vdash\bot$, and if $W\subseteq W'\subseteq\FL^{\pm}(\Gamma)$, then either $W'=W$ or $W'\vdash\bot$. 
	In the remainder, we will abuse notation by failing to distinguish, for $W\in S(\Gamma)$, between the \emph{set} $W$ and \emph{sentence} $\bigwedge W$, both of which will be referred to simply as $W$.\medskip
	
	\noindent\textbf{Fact:} If $\Gamma$ is finite, then so is $\FL(\Gamma)$ and thus $S(\Gamma)$.\smallskip
    
    Proofs of the following lemmas are straightforward and thus omitted:
    \begin{lemma}\label{equiv_lemma}
		For all $\gamma\in\FL^\pm(\Gamma)$, if $\gamma\not\vdash\bot$, then $\vdash\gamma\otto\bigvee_{\gamma\in W\in S(\Gamma)}\bigwedge W$
	\end{lemma}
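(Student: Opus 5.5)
The claim is that for $\gamma\in\FL^\pm(\Gamma)$ with $\gamma\not\vdash\bot$, $\gamma$ is provably equivalent to the disjunction of the atoms $W\in S(\Gamma)$ containing $\gamma$. I would prove the two directions of the biconditional separately. The right-to-left direction is immediate: each $W$ with $\gamma\in W$ has $\gamma$ as a conjunct of $\bigwedge W$, so $\vdash\bigwedge W\to\gamma$ by propositional tautologies. Combining these finitely many implications (the Fact guarantees $S(\Gamma)$ is finite when $\Gamma$ is; otherwise I would restrict attention to the finite closure of the single formula, or simply take $\Gamma$ finite, as the lemma is used that way) gives $\vdash\bigvee_{\gamma\in W}\bigwedge W\to\gamma$.

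For left-to-right, I first establish that $\vdash\bigvee_{W\in S(\Gamma)}\bigwedge W$, i.e.\ that the atoms exhaust logical space. I would argue this by listing $\FL(\Gamma)=\{\varphi_1,\dots,\varphi_n\}$ and observing that the tautology $\bigwedge_i(\varphi_i\lor\neg\varphi_i)$ distributes into the disjunction of all $2^n$ sign-choices $\bigwedge_i\pm\varphi_i$. Each sign-choice is either inconsistent, in which case by Lemma~\ref{provability_facts}(iii) its negation is provable and it may be deleted from the disjunction, or consistent. A consistent sign-choice is a maximal consistent subset of $\FL^\pm(\Gamma)$, because adding any further member $\pm\varphi_j$ either duplicates a conjunct or contradicts one. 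So the surviving disjuncts are exactly the elements of $S(\Gamma)$. One small bookkeeping point is that $\FL^\pm$ also contains $\neg\neg$-style formulas when $\varphi_i$ is itself a negation; these are handled by noting that a maximal consistent $W$ is determined up to provable equivalence by its choice on $\FL(\Gamma)$.

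Then $\vdash\gamma\to\bigvee_W(\gamma\land\bigwedge W)$. For each $W$ with $\gamma\notin W$, maximality gives $W\cup\{\gamma\}\vdash\bot$, since $\gamma\in\FL^\pm(\Gamma)$. Hence $\vdash\neg(\gamma\land\bigwedge W)$ by Lemma~\ref{provability_facts}(ii),(iii), and that disjunct drops out. What remains is $\vdash\gamma\to\bigvee_{\gamma\in W}\bigwedge W$. The hypothesis $\gamma\not\vdash\bot$ is used only to ensure that this disjunction is nonempty, so that the lemma is not vacuous in the form of an empty disjunction equal to $\bot$. In fact the derivation goes through regardless, since an empty disjunction would show $\vdash\neg\gamma$, contradicting consistency.

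The main obstacle I expect is the exhaustiveness step. Its subtlety lies in identifying consistent sign-choices with maximal consistent sets, given how $\FL^\pm$ handles negations. I would resolve it as described above: work with choices over $\FL(\Gamma)$ and note that any $\neg\neg\varphi$ is provably equivalent to $\varphi$, so maximal consistent sets correspond bijectively to consistent sign-choices.
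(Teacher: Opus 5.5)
Your proof is correct. The paper omits this proof as ``straightforward,'' and your argument is the standard one it has in mind: first show exhaustiveness $\vdash\bigvee_{W\in S(\Gamma)}\bigwedge W$ via consistent sign-choices over $\FL(\Gamma)$, then discard each $W$ with $\gamma\notin W$ by maximality. Proving exhaustiveness directly, as you do, also keeps things non-circular, since the paper later derives $\vdash\top\otto\bigvee_{W\in S(\Gamma)}W$ from this very lemma in the proof of Lemma~\ref{in_out_lemma}. Your $\neg\neg$ worry is moot: every member of $\FL^\pm(\Gamma)$ is literally some $\varphi_j$ or $\neg\varphi_j$, so consistent sign-choices coincide exactly, as sets, with the members of $S(\Gamma)$.
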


	\begin{lemma}\label{connectiveswork}
		For all $W\in S(\Gamma)$, 
		\begin{enumerate}
			\item If $\neg\gamma\in\FL^{\pm}(\Gamma)$ then $\gamma\not\in W$ iff $\neg\gamma\in W$;
			\item If $\gamma_1\lor\gamma_2\in\FL^{\pm}(\Gamma)$ then $\gamma_1\lor\gamma_2\in W$ iff $\gamma_1\in W$ or $\gamma_2\in W$. 
		\end{enumerate}
	\end{lemma}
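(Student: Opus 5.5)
The plan is to argue directly from the maximality and consistency of $W$, using the deduction facts of Lemma~\ref{provability_facts}(ii)--(iii) together with propositional tautologies. Consistency of a possibly infinite $W$ means that no finite subset derives $\bot$. Since derivations are finite, everything below goes through without assuming $\Gamma$ is finite.

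First I will record a syntactic observation about $\FL^{\pm}(\Gamma)$. If $\neg\gamma\in\FL^{\pm}(\Gamma)$, then $\gamma\in\FL^{\pm}(\Gamma)$. Indeed, either $\neg\gamma\in\FL(\Gamma)$, and then $\gamma$ is a subformula and so lies in $\FL(\Gamma)$; or $\neg\gamma$ was added as the negation of some $\gamma\in\FL(\Gamma)$. Likewise, if $\gamma_1\lor\gamma_2\in\FL^{\pm}(\Gamma)$, it is not of the form $\neg\delta$, so it must lie in $\FL(\Gamma)$. Hence $\gamma_1,\gamma_2\in\FL(\Gamma)$ by subformula closure, and so $\neg\gamma_1,\neg\gamma_2\in\FL^{\pm}(\Gamma)$.

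Next I will prove a closure lemma: if $\delta\in\FL^{\pm}(\Gamma)$ and $W\cup\{\delta\}$ is consistent, then $\delta\in W$. This holds because otherwise $W\cup\{\delta\}$ would be a consistent proper superset of $W$ inside $\FL^{\pm}(\Gamma)$, contradicting maximality.

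For item 1, both $\gamma$ and $\neg\gamma$ cannot lie in $W$, since $\{\gamma,\neg\gamma\}\vdash\bot$ by a tautology and modus ponens. Conversely, suppose neither lies in $W$. By the closure lemma, both $W\cup\{\gamma\}$ and $W\cup\{\neg\gamma\}$ are inconsistent. Lemma~\ref{provability_facts}(iii) then gives $W\vdash\neg\gamma$ and $W\vdash\neg\neg\gamma$, hence $W\vdash\bot$, which is a contradiction. This establishes $\gamma\notin W$ iff $\neg\gamma\in W$.

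For item 2, first suppose $\gamma_i\in W$ but $\gamma_1\lor\gamma_2\notin W$. Item 1 gives $\neg(\gamma_1\lor\gamma_2)\in W$, and the tautology $\gamma_i\to(\gamma_1\lor\gamma_2)$ makes $W$ inconsistent. Now suppose instead that $\gamma_1\lor\gamma_2\in W$ but $\gamma_1,\gamma_2\notin W$. By the syntactic observation and item 1, $\neg\gamma_1,\neg\gamma_2\in W$, and propositional reasoning again yields $W\vdash\bot$.

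The only step needing real care is the bookkeeping of which formulas belong to $\FL^{\pm}(\Gamma)$, so that item 1 may legitimately be applied. The logical content is routine.
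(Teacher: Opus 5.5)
Your proof is correct. It is the standard maximal-consistency argument the paper has in mind when it omits this proof as straightforward, and you correctly verify the memberships in $\FL^{\pm}(\Gamma)$ (of $\gamma$, $\neg\gamma_1$, $\neg\gamma_2$, $\neg(\gamma_1\lor\gamma_2)$) that maximality and item~1 require.
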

	
	Since `$S(\Gamma)$' can almost always be inferred from context, we often omit it.
    
    \begin{lemma}\label{in_out_lemma}
    	If $\mathbf{G}\subseteq S(\Gamma)$, then $\vdash(\bigwedge_{W\in\mathbf{G}}\neg W)\otto(\bigvee_{V\not\in\mathbf{G}}V)$.
    \end{lemma}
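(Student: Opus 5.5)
The plan is to reduce Lemma~\ref{in_out_lemma} to two facts about the atoms $W\in S(\Gamma)$: they are jointly exhaustive and pairwise exclusive. That is,
\begin{enumerate}[(a)]
\item $\vdash\bigvee_{V\in S(\Gamma)}V$, and
\item $\vdash\neg(V\land W)$ for distinct $V,W\in S(\Gamma)$.
\end{enumerate}
Throughout we take $\Gamma$ finite, as the surrounding \textbf{Fact} implicitly requires, so that all conjunctions and disjunctions here are finite. Once (a) and (b) are available, the biconditional follows by purely propositional reasoning, which is covered by the propositional tautologies axiom together with modus ponens.

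For (b), take distinct $V,W\in S(\Gamma)$. By maximality, neither contains the other, so there is some $\gamma\in V\setminus W$. Since $W$ is maximal consistent in $\FL^\pm(\Gamma)$, Lemma~\ref{connectiveswork} (or directly maximality) gives that $W$ contains the complement of $\gamma$: this is $\neg\gamma$ if $\gamma\in\FL(\Gamma)$, and $\delta$ if $\gamma=\neg\delta$. Hence $V\land W$ contains a formula and its negation, so $\vdash\neg(V\land W)$.

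For (a), the cleanest route is to argue that $\neg\bigvee_{V}V$ is inconsistent. Suppose it were consistent. Enumerate $\FL(\Gamma)=\{\gamma_1,\dots,\gamma_n\}$ and build a maximal consistent set in the usual Lindenbaum fashion: at each stage add $\gamma_i$ or $\neg\gamma_i$, whichever keeps $\neg\bigvee V$ together with the chosen formulas consistent. One choice must work, since if both extensions are inconsistent then, by Lemma~\ref{provability_facts}(iii) and propositional reasoning, the prior set is already inconsistent. The result is a consistent set $U\subseteq\FL^\pm(\Gamma)$ deciding every $\gamma_i$, which is therefore some $V_0\in S(\Gamma)$. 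But $V_0\land\neg\bigvee_V V$ refutes itself, since $V_0$ is one of the disjuncts, and this is a contradiction. Alternatively, one could apply Lemma~\ref{equiv_lemma} to $\gamma\lor\neg\gamma$; however, that formula need not lie in $\FL^\pm(\Gamma)$, so the Lindenbaum argument is safer. The only delicate step is the bookkeeping showing that a consistent choice always exists.

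Finally, the propositional step. For the direction from right to left: if $V\notin\mathbf{G}$, then by (b) $V$ implies $\neg W$ for each $W\in\mathbf{G}$, so $\bigvee_{V\notin\mathbf{G}}V\to\bigwedge_{W\in\mathbf{G}}\neg W$. For the direction from left to right: by (a), $\bigvee_{V\in\mathbf{G}}V\lor\bigvee_{V\notin\mathbf{G}}V$ is provable, and $\bigwedge_{W\in\mathbf{G}}\neg W$ refutes the first disjunct, leaving $\bigvee_{V\notin\mathbf{G}}V$. The edge cases are handled by the usual conventions for empty conjunctions and disjunctions. When $\mathbf{G}=\emptyset$ the left side is $\top$ and the right side is provable by (a). When $\mathbf{G}=S(\Gamma)$ the right side is $\bot$ and the left side is refutable by (a).
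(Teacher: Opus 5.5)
Your proof is correct and has the same structure as the paper's: the atoms are exhaustive, they are pairwise exclusive by maximality, and each direction then follows by propositional reasoning. The only difference is how you get exhaustiveness. The paper obtains $\vdash\top\otto\bigvee_{W\in S(\Gamma)}W$ by applying Lemma~\ref{equiv_lemma} separately to some $\gamma\in\FL(\Gamma)$ and to $\neg\gamma$, both of which lie in $\FL^{\pm}(\Gamma)$; this sidesteps your worry about $\gamma\lor\neg\gamma$. Your direct Lindenbaum argument is instead self-contained and avoids that lemma's consistency side condition.
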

    \begin{proof}
		For `$\to$', first note that by Lemma~\ref{equiv_lemma}, $\vdash\gamma\otto\bigvee_{\gamma\in W}W$ and by Lemma~\ref{equiv_lemma} and Lemma~\ref{connectiveswork}, $\vdash\neg\gamma\otto\bigvee_{\gamma\not\in W}W$. It follows that $\vdash \top \otto \bigvee_{W\in S(\Gamma)} W$.  Thus $\vdash\top\to(\bigvee_{W\in\GG} W\lor\bigvee_{V\not\in\GG} V)$. So $\vdash\neg\bigwedge_{W\in\GG}\neg W\lor\bigvee_{V\not\in\GG} V$, which is to say that $\vdash\bigwedge_{W\in\GG}\neg W\to\bigvee_{V\not\in\GG} V$. 
		
		For the other direction, observe that if $V\not\in\GG$ but $W\in\GG$, then by maximality, $\vdash (V\land W)\to\bot$. Since this is true for all $W\in\GG$, it follows that $\vdash (V\land\bigvee_{W\in\GG} W)\to\bot$. And since this is true for all $V\not\in\GG$, it follows that $\vdash(\bigvee_{V\not\in\GG} V\land\bigvee_{W\in\GG} W)\to\bot$. It follows that $\vdash\neg(\bigvee_{V\not\in\GG} V\land\neg\bigwedge_{W\in\GG} \neg W)$. So $\vdash\bigvee_{V\not\in\GG} V\to\bigwedge_{W\in\GG} \neg W$.
    \end{proof}
    
    Completeness proofs for ordinary PDL use $S(\Gamma)$ (where $\Gamma$ is a consistent set of formulas) to construct a model $\langle S(\Gamma),v^\canon_\form,v^\canon_\prog\rangle$ where $v^\canon_\form(p)=\{W\in S(\Gamma) : p\in W\}$ and $\langle W_1,W_2\rangle\in v^\canon_\prog(\program{p})$ iff $W_1\land\langle\program{p}\rangle W_2\not\vdash\bot$. For our purposes, we instead use the model $M^\canon(\Gamma)=\langle S(\Gamma),v^\canon_\form,v^\canon_\prog,R^\canon_\Box\rangle$ where the first three elements are exactly as in the ordinary PDL case and $\langle W_1,W_2\rangle\in R^\canon_\Box$ iff $W_1\land\Diamond W_2\not\vdash\bot$.
    
    \begin{lemma}\label{extension}
    	If $W\in S(\Gamma)$ and $W'\in S(\Gamma)$, then for all programs $\pi$, if $W\land\langle\pi\rangle W'\not\vdash\bot$, then $\langle W,W'\rangle\in \llbracket\pi\rrbracket^\canon$.
    \end{lemma}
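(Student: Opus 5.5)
This is the standard "existence lemma" direction of the PDL completeness proof, and I would prove it by structural induction on the program term $\pi$. Formulas of the shape $\langle\pi\rangle W'$ need not lie in $\FL(\Gamma)$, so I would carry $W'$ as an arbitrary atom throughout. Each case reduces consistency of $W\land\langle\pi\rangle W'$ to consistency of simpler conjunctions by means of Lemma~\ref{provability_facts}. The disjunction-splitting device is Lemma~\ref{equiv_lemma} together with $\vdash\top\otto\bigvee_{V\in S(\Gamma)}V$, which was established in the proof of Lemma~\ref{in_out_lemma}.

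\textbf{Base and easy cases.} For atomic $\program{p}$ the claim is exactly the definition of $v^\canon_\prog$.
\begin{itemize}
\item For $\pi_1\cup\pi_2$, Lemma~\ref{provability_facts}(iv) and (vii) show that $W\land\langle\pi_i\rangle W'$ is consistent for some $i$. The induction hypothesis then applies directly.
\item For $?\gamma$, item (vi) gives that $W\land\gamma\land W'$ is consistent. Two distinct maximal consistent sets are jointly inconsistent, so $W=W'$. The induction hypothesis for the formula part should then give $M^\canon,W\vDash\gamma$, hence $\langle W,W\rangle\in\llbracket ?\gamma\rrbracket^\canon$.
\end{itemize}
The $?\gamma$ case needs a truth lemma for $\gamma$. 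I would therefore run the induction simultaneously with the truth lemma on the Fischer--Ladner order, or at least note that $\gamma\in\FL(\Gamma)$ because tests occurring in $\FL$-formulas have their test formulas in $\FL$. Then $\gamma\in W$ by maximality and consistency.

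\textbf{Composition.} Suppose $W\land\langle\pi_1\rangle\langle\pi_2\rangle W'$ is consistent (item (ix)). Rewrite $\top$ as $\bigvee_V V$ inside the modality:
\[
\vdash\langle\pi_1\rangle\langle\pi_2\rangle W'\otto\langle\pi_1\rangle\bigvee_V(V\land\langle\pi_2\rangle W').
\]
Items (vii) and (i) (normality) then distribute the disjunction, so some $V$ makes $W\land\langle\pi_1\rangle(V\land\langle\pi_2\rangle W')$ consistent. Consequently $W\land\langle\pi_1\rangle V$ and $V\land\langle\pi_2\rangle W'$ are both consistent, and the induction hypothesis composes the two transitions.

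\textbf{Star (main obstacle).} Let $\mathbf{G}$ be the set of $V$ with $\langle V,W'\rangle\in\llbracket\pi^*\rrbracket^\canon$, and suppose $W\notin\mathbf{G}$. Put $\delta=\bigvee_{V\notin\mathbf{G}}V$; this is finite by the Fact, and $\delta\otto\bigwedge_{U\in\mathbf{G}}\neg U$ by Lemma~\ref{in_out_lemma}. I would show $\vdash\delta\to[\pi]\delta$ in two steps:
\begin{itemize}
\item Otherwise some $V\notin\mathbf{G}$ and $U\in\mathbf{G}$ make $V\land\langle\pi\rangle U$ consistent, by the same splitting argument as for composition.
\item The induction hypothesis then gives $\langle V,U\rangle\in\llbracket\pi\rrbracket^\canon$, which forces $V\in\mathbf{G}$, a contradiction.
\end{itemize}
Necessitation and the induction axiom then yield $\vdash\delta\to[\pi^*]\delta$. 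Since $W'\in\mathbf{G}$, we have $\vdash\delta\to\neg W'$, hence $\vdash\delta\to[\pi^*]\neg W'$. As $W\vdash\delta$, this makes $W\land\langle\pi^*\rangle W'$ inconsistent, contradicting the hypothesis. I expect the careful bookkeeping here, namely transferring consistency through $\bigvee$ and Lemma~\ref{in_out_lemma}, to be the delicate step.
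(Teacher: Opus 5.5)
Your argument is correct and shares the paper's skeleton: induction on $\pi$, splitting consistency over disjunctions of atoms, and closing the star case with the induction axiom applied to a provably $[\pi]$-invariant disjunction of atoms. The differences are local.

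In the star case the paper works forward from $W$. It takes $\GG$ to be the least set containing $W$ and closed under ``$V\land\langle\pi\rangle V'\not\vdash\bot$ implies $V'\in\GG$''. An auxiliary induction over the generation of $\GG$ shows that every member is $\llbracket\pi^*\rrbracket^{\canon}$-reachable from $W$. From $\vdash Y\to[\pi]Y$ with $Y=\bigvee\GG$ it then gets $\vdash W\to[\pi^*]Y$, and hence $W'\in\GG$ directly. You instead work backward from $W'$ with the complementary invariant. This replaces that auxiliary induction by a single use of the hypothesis, at the cost of an indirect argument; the two arguments are duals.

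In the composition case your split via $\vdash\top\otto\bigvee_V V$ is cleaner than the paper's. The paper appeals to Lemma~\ref{equiv_lemma} for $\langle\pi_2\rangle W'$, which, as you observe, need not lie in $\FL^{\pm}(\Gamma)$.

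Your worry about tests is well founded, and the paper's ``straightforward'' conceals it. Let $b$ be an atom not occurring in $\Gamma$. Then $W\land\langle ?b\rangle W$ is consistent, since it is equivalent to $W\land b$, while $v^{\canon}_{\form}(b)=\emptyset$. So the lemma is false as literally stated for arbitrary $\pi$.

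You must therefore take your first option rather than the fallback. Restrict the lemma to programs whose test formulas lie in $\FL(\Gamma)$. This class is closed under subprograms and contains every $\pi$ with $\langle\pi\rangle\gamma\in\FL(\Gamma)$, which is all Lemma~\ref{programswork} uses. Then prove the lemma simultaneously with Lemmas~\ref{programswork} and~\ref{pdl_box_key_lemma} by induction on subexpressions. Merely noting $\gamma\in\FL(\Gamma)$ gives $\gamma\in W$, but getting from there to $M^{\canon}(\Gamma),W\vDash\gamma$ is exactly the truth lemma for $\gamma$.
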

    \begin{proof}
    	By induction on $\pi$. The base case follows from the definition of $v^\canon_\prog$ and the test and choice cases are straightforward. 
	
		%Suppose $\pi=\program{?\gamma}$ and $W\land\langle\program{?\gamma}\rangle W'\not\vdash\bot$. Then by (vi) of Lemma~\ref{provability_facts}, $W\land(\gamma\land W')\not\vdash\bot$. So (i) $W\land W'\not\vdash\bot$ and (ii) $W\land \gamma\not\vdash\bot$. From (i) and the fact that both $W$ and $W'$ are maximal consistent subsets of $\FL^{\pm}(\Gamma)$, we must have that $W=W'$. From (ii) and the fact that $W$ is maximal consistent we have that $\gamma\in W$. So $\langle W,W'\rangle\in\llbracket\pi\rrbracket^\canon$.
		
		%Suppose $\pi=\pi_1\cup\pi_2$ and $W\land\langle\pi_1\cup\pi_2\rangle W'\not\vdash\bot$. Then by (iv) of Lemma~\ref{provability_facts} and some distribution, we have that $(W\land\langle\pi_1\rangle W')\lor(W\land\langle\pi_2\rangle W')\not\vdash\bot$. It follows that either $(W\land\langle\pi_1\rangle W')\not\vdash\bot$ or $(W\land\langle\pi_2\rangle W')\not\vdash\bot$. So by the inductive hypothesis, either $\langle W,W'\rangle\in\llbracket\pi_1\rrbracket$ or $\langle W,W'\rangle\in\llbracket\pi_2\rrbracket$. Either way, $\langle W,W'\rangle\in\llbracket\pi_1\rrbracket\cup\llbracket\pi_2\rrbracket=\llbracket\pi_1\cup\pi_2\rrbracket$. 
		
		Suppose $\pi=\pi_1;\pi_2$ and $W\land\langle\program{\pi_1;\pi_2}\rangle W'\not\vdash\bot$. Then $W\land\langle\pi_1\rangle(\langle\pi_2\rangle W'\land\langle\pi_2\rangle W')\not\vdash\bot$. By Lemma~\ref{equiv_lemma}, then, we have that $W\land\langle\pi_1\rangle\left(\left(\bigvee_{\langle\pi_2\rangle W'\in W''}W''\right)\land\langle\pi_2\rangle W'\right)\not\vdash\bot$. After a bit of rearranging this gives that $\bigvee_{\langle\pi_2\rangle W'\in W''}(W\land\langle\pi_1\rangle(W''\land\langle\pi_2\rangle W'))\not\vdash\bot$. Thus, for at least one $W''$ with $\langle\pi_2\rangle W'\in W''$, $W\land\langle\pi_1\rangle(W''\land\langle\pi_2\rangle W')\not\vdash\bot$. It follows that $W\land\langle\pi_1\rangle W''\not\vdash\bot$ and, after an application of (i) of Lemma~\ref{provability_facts}, that $W''\land\langle\pi_2\rangle W'\not\vdash\bot$. So by the inductive hypothesis, $\langle W,W''\rangle\in\llbracket\pi_1\rrbracket$ and $\langle W'',W'\rangle\in\llbracket\pi_2\rrbracket$. So $\langle W,W'\rangle\in\llbracket\pi_1;\pi_2\rrbracket$.
		
		Suppose $\pi=\pi^*$ and $W\land\langle\pi^*\rangle W'\not\vdash\bot$. Let $\mathbf{G}$ be the smallest subset of $S(\Gamma)$ such that (a) $W\in \GG$ and (b) if $V\in \GG$ and $V\land\langle\pi\rangle V'\not\vdash\bot$, then $V'\in \GG$. Observe that if $U\in \GG$, then by repeated application of the inductive hypothesis, $\langle W,U\rangle\in\llbracket\pi^*\rrbracket$. Now let $Y=\bigvee_{g\in\GG} g$. By Lemma~\ref{in_out_lemma}, $\neg Y$ is equivalent to $\bigvee_{h\not\in\GG} h$. So $Y\land\langle\pi\rangle\neg Y$ is equivalent to $\bigvee_{\substack{g\in \GG\\h\not\in \GG}}g\land\langle\pi\rangle h$. It follows from the definition of $\GG$ that each disjunct is inconsistent. Thus the disjunction as a whole is inconsistent. So $\neg(Y\land\langle\pi\rangle\neg Y)$ is a theorem. But then $Y\to[\pi]Y$ is also a theorem and thus so is $[\pi^*](Y\to[\pi]Y)$. So $W\to[\pi^*](Y\to[\pi]Y)$ is a theorem as well. Note also that since $W\in\GG$ by construction and $Y$ is the disjunction of the members of $\GG$, $W\to Y$ is also a theorem. It follows that $W\to(Y\land[\pi^*](Y\to[\pi]Y))$ is a theorem. So by the induction axiom, $W\to[\pi^*]Y$ is a theorem. It follows that $W\land\neg[\pi^*]Y\vdash\bot$, and thus that $W\land\langle\pi^*\rangle\neg Y\vdash\bot$. So $W\land\langle\pi^*\rangle\bigvee_{h\not\in\GG}h\vdash\bot$. 
		
		After a by-now-familiar collection of rearrangements, $\bigvee_{h\not\in\GG}(W\land\langle\pi^*\rangle h)\vdash\bot$. Thus, for each $h\not\in\GG$, $W\land\langle\pi^*\rangle h\vdash\bot$. But then since $W\land\langle\pi^*\rangle W'\not\vdash\bot$, we must have that $W'\in\GG$. So $\langle W,W'\rangle\in\llbracket\pi^*\rrbracket$ as required. 
    \end{proof}
    
    \begin{lemma}\label{programswork}
    	For all $W\in S(\Gamma)$, if $\langle\program\pi\rangle\gamma\in\FL^{\pm}(\Gamma)$ then $\langle\pi\rangle\gamma\in W$ iff for some $W'\in S(\Gamma)$, $\langle W,W'\rangle\in \llbracket\pi\rrbracket^\canon$ and $\gamma\in W'$.
    \end{lemma}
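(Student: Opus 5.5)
The two directions are handled separately: the left-to-right direction reduces to Lemma~\ref{extension}, and the right-to-left direction is an induction on $\pi$ in the style of the standard Fischer--Ladner truth lemma.

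For the left-to-right direction, suppose $\langle\pi\rangle\gamma\in W$. Since $\gamma\in\FL(\Gamma)$ (it is a subformula), Lemma~\ref{equiv_lemma} gives $\vdash\gamma\otto\bigvee_{\gamma\in W'}W'$, assuming $\gamma$ is consistent; if it is not, (i) of Lemma~\ref{provability_facts} makes $\langle\pi\rangle\gamma$ inconsistent, contradicting the consistency of $W$. Hence $W\land\langle\pi\rangle\bigvee_{\gamma\in W'}W'\not\vdash\bot$, and by (vii) of Lemma~\ref{provability_facts} some disjunct satisfies $W\land\langle\pi\rangle W'\not\vdash\bot$ with $\gamma\in W'$. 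Lemma~\ref{extension} then yields $\langle W,W'\rangle\in\llbracket\pi\rrbracket^\canon$.

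For the right-to-left direction I induct on $\pi$, proving the claim for every $\gamma$ with $\langle\pi\rangle\gamma\in\FL^\pm(\Gamma)$, and using the closure clauses of $\FL$ to ensure each formula invoked lies in $\FL^\pm(\Gamma)$.
\begin{itemize}
\item \emph{Atomic} $\program{p}$: the definition of $v^\canon_\prog$ gives $W\land\langle\program{p}\rangle W'\not\vdash\bot$. Since $\gamma\in W'$, also $W\land\langle\program{p}\rangle\gamma\not\vdash\bot$. By maximality of $W$ and Lemma~\ref{connectiveswork}, if $\langle\program{p}\rangle\gamma\notin W$ then $\neg\langle\program{p}\rangle\gamma\in W$, which is a contradiction.
\item \emph{Test} $?\gamma_1$: then $W=W'$ and $\gamma_1\in W$, and (vi) gives the result, since $\gamma_1\in\FL$.
\item \emph{Choice}: use (iv) together with the closure clause.
\item \emph{Composition}: find $W''$ with $\langle\pi_2\rangle\gamma\in W''$ by the induction hypothesis, since $\langle\pi_2\rangle\gamma$ is a subformula of $\langle\pi_1\rangle\langle\pi_2\rangle\gamma\in\FL$. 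Then the induction hypothesis for $\pi_1$ gives $\langle\pi_1\rangle\langle\pi_2\rangle\gamma\in W$, and (ix) with maximality gives $\langle\pi_1;\pi_2\rangle\gamma\in W$.
\end{itemize}

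The main obstacle is the star case. Here $\langle W,W'\rangle\in\llbracket\pi\rrbracket^{\canon\,n}$ for some $n$, via a chain $W=W_0,\dots,W_n=W'$. I would induct on $n$. For $n=0$, $\gamma\in W$, so (v) and maximality give $\langle\pi^*\rangle\gamma\in W$. For the step, assume $\langle\pi^*\rangle\gamma\in W_1$. Since $\langle\pi\rangle\langle\pi^*\rangle\gamma\in\FL$ by the closure clause, the outer induction hypothesis for $\pi$, applied to the formula $\langle\pi^*\rangle\gamma$, gives $\langle\pi\rangle\langle\pi^*\rangle\gamma\in W_0$. Then (v) and maximality give $\langle\pi^*\rangle\gamma\in W_0$. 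The care needed here is to check that the outer induction hypothesis is stated for arbitrary formulas in the closure, not just for $\gamma$, which is why it is quantified over all $\gamma$.
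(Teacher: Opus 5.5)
Your proof is correct. Outside the star case it follows the paper: left-to-right goes through Lemma~\ref{equiv_lemma}, (vii) and Lemma~\ref{extension}, and right-to-left is an induction on $\pi$ with the hypothesis quantified over all $\gamma$.

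The star case is where you take a different route. The paper runs an inner induction showing $W\land\langle\pi^n\rangle\gamma\not\vdash\bot$ by peeling off the \emph{last} link of the chain. The outer hypothesis gives $\langle\pi\rangle\gamma\in V_k$, and the inner hypothesis is then applied to the new formula $\langle\pi\rangle\gamma$. Only at the end is consistency converted into membership, via (viii) of Lemma~\ref{provability_facts} and maximality. You instead keep the single formula $\langle\pi^*\rangle\gamma$ fixed, peel off the \emph{first} link, and push membership backwards along the chain using the unfolding (v).

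What your route gains is that the outer hypothesis is only ever applied to $\langle\pi^*\rangle\gamma$. The closure membership this needs, $\langle\pi\rangle\langle\pi^*\rangle\gamma\in\FL(\Gamma)$, is guaranteed by the star clause. In the paper's version, unwinding the inner hypothesis on $\langle\pi\rangle\gamma$ calls the outer hypothesis on $\langle\pi\rangle\langle\pi\rangle\gamma$, then on $\langle\pi\rangle\langle\pi\rangle\langle\pi\rangle\gamma$, and so on. These formulas need not lie in $\FL^{\pm}(\Gamma)$. Yours is the standard Fischer--Ladner argument. It also avoids (viii) and the identification of $\langle\pi^k\rangle\langle\pi\rangle\gamma$ with $\langle\pi^{k+1}\rangle\gamma=\langle\pi;\pi^k\rangle\gamma$, which is not literally (ix).

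One caveat applies equally to the paper. $\llbracket ?\gamma_1\rrbracket^\canon$ is defined by truth in $M^\canon(\Gamma)$, so your test step's inference to $\gamma_1\in W$ uses Lemma~\ref{pdl_box_key_lemma} for $\gamma_1$. Strictly, this lemma and that one therefore have to be proved by a simultaneous induction.
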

    \begin{proof}
    	Left-to-right is straightforward.
		%For left-to-right, suppose $\langle\pi\rangle\gamma\in W$. Then $W\land\langle\pi\rangle\gamma\not\vdash\bot$. Then by Lemma~\ref{equiv_lemma}, $W\land\langle\pi\rangle\bigvee_{\gamma\in W'}W'\not\vdash\bot$. So $\bigvee_{\gamma\in W'}(W\land\langle\pi\rangle W')\not\vdash\bot$. So for at least one $W'$ with $\gamma\in W'$, $W\land\langle\pi\rangle W'\not\vdash\bot$ which, by the previous lemma, gives that $\langle W,W'\rangle\in\llbracket\pi\rrbracket^\canon$. 
		For right-to-left, we do an induction on $\pi$. For the base case, suppose for some $W'\in S(\Gamma)$ that $\langle W,W'\rangle\in v^\canon_\prog(\program{p})$ and $\gamma\in W'$. Then $W\land\langle\program{p}\rangle W'\not\vdash\bot$. It follows that $W\land\langle\program{p}\rangle\gamma\not\vdash\bot$. Thus since $W$ is maximal, $\langle\program{p}\rangle\gamma\in W$. For the induction, only the star case is worth examining
				
		%Suppose that $\langle W,W'\rangle\in\llbracket\pi_1;\pi_2\rrbracket$ and $\gamma\in W'$. Then there is $W''$ so that (i) $\langle W,W''\rangle\in\llbracket\pi_1\rrbracket$ and (ii) $\langle W'',W'\rangle\in\llbracket\pi_2\rrbracket$. By the inductive hypothesis, (ii) gives that $\langle\pi_2\rangle\gamma\in W''$. This, together with (i) and the inductive hypothesis then gives that $\langle\pi_1\rangle\langle\pi_2\rangle\gamma\in W$, from which it follows that $\langle\pi_1;\pi_2\rangle\gamma\in W$ as required.
		
		Suppose that $\langle W,W'\rangle\in\llbracket\pi^*\rrbracket$ and $\gamma\in W'$. Then either $W=W'$---in which case (v) of Lemma~\ref{provability_facts} gives that $\langle\pi^*\rangle\gamma\in W$ as required---or there are $V_1,\dots,V_n=W'$ so that $\langle W,V_1\rangle\in\llbracket\pi\rrbracket$ and for $1\leq i\leq n$, $\langle V_i,V_{i+1}\rangle\in\llbracket\pi\rrbracket$. We show by induction on $n$ that in all such cases, $W\land\langle\pi^n\rangle\gamma\not\vdash\bot$. We distinguish this \emph{inner induction} that applies only to this from what we will now call the \emph{outer induction} being used to prove the lemma as a whole.
		
		If $n=1$, then $V_1=W'$ and $\langle W,W'\rangle\in\llbracket\pi\rrbracket$. So by the inductive hypothesis, $\langle\pi\rangle\gamma\in W$. Thus since $W$ is consistent, $W\land\langle\pi^1\rangle\gamma\not\vdash\bot$. Suppose $n=k+1$. Since $\langle V_k,W'\rangle\in\llbracket\pi\rrbracket$, the outer inductive hypothesis gives that $\langle\pi\rangle\gamma\in V_k$. By the inner inductive hypothesis, then, $W\land\langle\pi^k\rangle\langle\pi\rangle\gamma\not\vdash\bot$. But then by (ix) of Lemma~\ref{provability_facts}, $W\land\langle\pi^{k+1}\rangle\gamma\not\vdash\bot$. So by the inner induction, $W\land\langle\pi^n\rangle\gamma\not\vdash\bot$ for all $n$. Thus by (viii) of Lemma~\ref{provability_facts}, $W\land\langle\pi^*\rangle\gamma\not\vdash\bot$. So since $W$ is maximal, $\langle\pi^*\rangle\gamma\in W$, concluding the outer induction.
    \end{proof}
    
    \begin{lemma}\label{lemma_box_works}
    	For all $W\in S(\Gamma)$, if $\Box\gamma\in\FL^{\pm}(\Gamma)$ and $\Box\gamma\not\vdash\bot$, then $\Box\gamma\in W$ iff for all $W'\in S(\Gamma)$, if $\langle W,W'\rangle\in R^\canon_\Box$ then $\gamma\in W'$.
	\end{lemma}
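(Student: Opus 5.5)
We prove the two directions separately, using the same moves as in Lemmas~\ref{extension} and~\ref{programswork}, with $\Diamond$ and the S5 axioms in place of $\langle\pi\rangle$ and the PDL axioms. Since $\Box\gamma\in\FL^{\pm}(\Gamma)$ and $\FL$ is closed under subformulas, $\gamma\in\FL(\Gamma)$. Hence each $W'$ contains exactly one of $\gamma,\neg\gamma$ by Lemma~\ref{connectiveswork}, and Lemma~\ref{equiv_lemma} applies to $\gamma$ and $\neg\gamma$.

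\emph{Left to right.} Suppose $\Box\gamma\in W$, $\langle W,W'\rangle\in R^\canon_\Box$, and, for contradiction, $\gamma\notin W'$. Then $\neg\gamma\in W'$, so $\vdash W'\to\neg\gamma$. By necessitation and the K axiom for $\Box$ we get $\vdash\Diamond W'\to\Diamond\neg\gamma$. Thus $W\land\Diamond W'\vdash\Box\gamma\land\neg\Box\gamma\vdash\bot$, contradicting $W\land\Diamond W'\not\vdash\bot$. So $\gamma\in W'$.

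\emph{Right to left.} We argue by contraposition. Suppose $\Box\gamma\notin W$. Since $\Box\gamma\in\FL(\Gamma)$, maximality gives $\neg\Box\gamma\in W$, so $W\land\Diamond\neg\gamma\not\vdash\bot$.
\begin{itemize}
\item If $\neg\gamma\vdash\bot$, then $\vdash\gamma$ and so $\vdash\Box\gamma$, which makes $W$ inconsistent. Hence $\neg\gamma\not\vdash\bot$.
\item Lemma~\ref{equiv_lemma} then gives $\vdash\neg\gamma\otto\bigvee_{\neg\gamma\in W'}W'$.
\item Because $\Diamond$ distributes over finite disjunctions (the analogue of (vii) of Lemma~\ref{provability_facts}, derivable from K and necessitation for $\Box$), we obtain $\bigvee_{\neg\gamma\in W'}(W\land\Diamond W')\not\vdash\bot$.
\item So some $W'$ with $\neg\gamma\in W'$ satisfies $W\land\Diamond W'\not\vdash\bot$. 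That is, $\langle W,W'\rangle\in R^\canon_\Box$ and $\gamma\notin W'$.
\end{itemize}
This contradicts the right-hand side of the statement. (If $\FL(\Gamma)$ is infinite, we first pass to the finite closure of the relevant formulas, as the paper implicitly does.) The hypothesis $\Box\gamma\not\vdash\bot$ is not needed in either direction; it only guarantees that the statement is non-vacuous.

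\emph{Main obstacle.} The only delicate point is the bookkeeping that lets us treat $W'$ as a single formula $\bigwedge W'$ inside $\Diamond$, together with the distribution of $\Diamond$ over the finite disjunction. Both are routine once $S(\Gamma)$ is finite. Notably, the S5 axioms ($\Box\gamma\to\gamma$, $\Diamond\gamma\to\Box\Diamond\gamma$) are not used here. They matter only later, for showing that $R^\canon_\Box$ is an equivalence relation containing each $v^\canon_\prog(\program{p})$; the latter uses the axiom $\langle\pi\rangle\gamma\to\Diamond\gamma$.
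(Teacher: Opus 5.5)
Your proposal is correct and follows the paper's proof essentially step for step. Left to right, you derive $\Diamond W'\vdash\neg\Box\gamma$ from $\neg\gamma\in W'$; right to left, you pass from $\neg\Box\gamma\in W$ to $W\land\Diamond\neg\gamma\not\vdash\bot$, expand $\neg\gamma$ via Lemma~\ref{equiv_lemma}, and distribute $\Diamond$ over the disjunction to extract a witness $W'$. Your extra check that $\neg\gamma\not\vdash\bot$ (which Lemma~\ref{equiv_lemma} formally requires and the paper skips) is right, and so is your observation that the hypothesis $\Box\gamma\not\vdash\bot$ is never used. The parenthetical about infinite $\FL(\Gamma)$ is unnecessary: the lemma only makes sense when $\FL(\Gamma)$ is finite, since otherwise $\bigwedge W$ and hence $R^\canon_\Box$ are undefined, and that finite case is the only one the paper uses.
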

	\begin{proof}
		Suppose $\Box\gamma\in W$, $W'\in S(\Gamma)$, and $\gamma\not\in W'$. Then $\neg\gamma\in W'$. So $\Diamond W'\vdash\neg\Box\gamma$. Thus $W\land\Diamond W'\vdash\bot$. So $\langle W,W'\rangle\not\in R^\canon_\Box$. 
		
		Now suppose that $\Box\gamma\not\in W$. Then $\neg\Box\gamma\in W$. So $\Diamond\neg\gamma\in W$. Thus since $W$ is consistent, $W\land\Diamond\neg\gamma\not\vdash\bot$. But by Lemma~\ref{equiv_lemma}, we then have that $W\land\Diamond\bigvee_{\neg\gamma\in W'}W'\not\vdash\bot$. So for at least one $W'$ with $\neg\gamma\in W'$ we must have that $W\land\Diamond W'\not\vdash\bot$, and thus $\langle W,W'\rangle\in R^\canon_\Box$. And since $\neg\gamma\in W'$, $\gamma\not\in W'$. So it's not the case that for all $W'$, if $\langle W,W'\rangle\in R^\canon_\Box$, then $\gamma\in W'$.
	\end{proof}

	Reminder: the canonical model $M^\canon(\Gamma)$ is the tuple $\langle S(\Gamma),v^\canon_\form,v^\canon_\prog,R^\canon_\Box\rangle$ where $S(\Gamma)$, $v^\canon_\prog$, and $R^\canon_\Box$ are as we've been discussing and $v^\canon_{\form}(a)=\{W\in S(\Gamma):a\in W\}$. Since this construction precisely mimics, in the first three components, the construction of the usual canonical model for $\Box$-free PDL, it's clear that $M^\canon(\Gamma)$ is a PDL-model. For similar reasons, it's clear that $M^\canon(\Gamma)$ is an S5-model

	\begin{lemma}\label{its_a_model}
		$M^\canon(\Gamma)$ is a PDL$^\Box$-model.
	\end{lemma}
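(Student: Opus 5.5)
To show that $M^\canon(\Gamma)$ is a PDL$^\Box$-model we have to check three things. First, $\langle S(\Gamma),v^\canon_\form,v^\canon_\prog\rangle$ is a PDL-model. Second, $R^\canon_\Box$ is an equivalence relation on $S(\Gamma)$. Third, $v^\canon_\prog(\program{p})\subseteq R^\canon_\Box$ for every atomic program $\program{p}$.

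The first item is immediate. $S(\Gamma)$ is nonempty whenever $\Gamma$ is consistent: extend any consistent subset of $\FL^\pm(\Gamma)$ to a maximal one. Also, $v^\canon_\form$ and $v^\canon_\prog$ are functions of the right types by definition.

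For the second item we use the S5 axioms and argue proof-theoretically.
\begin{itemize}
\item \emph{Reflexivity.} $W$ is consistent. From $\Box\neg W\to\neg W$ (the $T$ axiom) we get $\vdash W\to\Diamond W$. Hence $W\land\Diamond W\not\vdash\bot$.
\item \emph{Symmetry.} Suppose $W_1\land\Diamond W_2\not\vdash\bot$. If $W_2\land\Diamond W_1\vdash\bot$, then $\vdash W_2\to\Box\neg W_1$. By necessitation and $K$, $\vdash\Diamond W_2\to\Diamond\Box\neg W_1$. The $5$ axiom ($\Diamond\gamma\to\Box\Diamond\gamma$) together with $T$ yields the $B$ principle $\Diamond\Box\gamma\to\gamma$. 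So $\vdash\Diamond W_2\to\neg W_1$, a contradiction.
\item \emph{Transitivity.} Suppose $W_1\land\Diamond W_2\not\vdash\bot$ and $W_2\land\Diamond W_3\not\vdash\bot$, but $W_1\land\Diamond W_3\vdash\bot$. Then $\vdash W_1\to\Box\neg W_3$, so by 5/4-reasoning $\vdash W_1\to\Box\Box\neg W_3$. Hence $W_1\land\Diamond(\Diamond W_3)\vdash\bot$. Now we use Lemma~\ref{equiv_lemma}/Lemma~\ref{in_out_lemma} and the maximality of $W_2$. Since $W_2\land\Diamond W_3$ is consistent and $W_2$ is a maximal consistent conjunction, we want $W_1\land\Diamond(W_2\land\Diamond W_3)$ to be consistent. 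This holds because $\vdash W_2\to\bigvee$ of the atoms, and $\Diamond W_3$ is not in the language of $\FL^\pm(\Gamma)$.
\end{itemize}

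This last point is the delicate step and the one I expect to be the main obstacle. The formula $\Diamond W_3$ need not lie in $\FL^\pm(\Gamma)$, so maximality of $W_2$ does not directly decide it.

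I would handle it as in Lemma~\ref{extension}'s star case. Let $\GG=\{V : W_1\land\Diamond V\not\vdash\bot\}$ and $Y=\bigvee_{V\in\GG}V$. By Lemma~\ref{in_out_lemma} we get $\vdash W_1\to\Box Y$. Applying the 4-principle gives $\vdash W_1\to\Box\Box Y$. Now assume $W_2\in\GG$ and $W_3\notin\GG$. From $W_2\land\Diamond W_3$ consistent we get $\Diamond(W_2\land\Diamond W_3)$ consistent with $W_1$. Then $W_1\land\Diamond\Diamond W_3$ is consistent. Combined with $\vdash W_1\to\Box\Box Y$, this gives $W_3\land Y$ consistent. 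But $W_3\notin\GG$, so $W_3\land Y$ is inconsistent by maximality, which is a contradiction.

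Finally, for the third item, suppose $\langle W_1,W_2\rangle\in v^\canon_\prog(\program{p})$, i.e. $W_1\land\langle\program{p}\rangle W_2\not\vdash\bot$. The axiom $\langle\pi\rangle\gamma\to\Diamond\gamma$ gives $\vdash\langle\program{p}\rangle W_2\to\Diamond W_2$. So $W_1\land\Diamond W_2\not\vdash\bot$, and hence $\langle W_1,W_2\rangle\in R^\canon_\Box$.
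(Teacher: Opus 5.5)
Your third item is the paper's proof: the paper calls the PDL-model and S5 conditions ``clear'' and only checks $v^\canon_\prog(\program{p})\subseteq R^\canon_\Box$ via $\langle\pi\rangle\gamma\to\Diamond\gamma$, as you do. Your reflexivity and symmetry arguments are also fine. The gap is transitivity.

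Both versions of your argument hinge on the step ``from $W_1\land\Diamond W_2\not\vdash\bot$ and $W_2\land\Diamond W_3\not\vdash\bot$ we get $W_1\land\Diamond(W_2\land\Diamond W_3)\not\vdash\bot$''. Nothing licenses this. These are two unrelated consistency facts, and no axiom glues them together. As you note yourself, maximality of $W_2$ cannot decide $\Diamond W_3$, which need not lie in $\FL^\pm(\Gamma)$. Worse, in S5 $\Diamond(W_2\land\Diamond W_3)$ is provably equivalent to $\Diamond W_2\land\Diamond W_3$. So the step assumes something \emph{stronger} than the conclusion $W_1\land\Diamond W_3\not\vdash\bot$.

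The $\GG$/$Y$ detour adds nothing: once $W_1\land\Diamond\Diamond W_3$ is consistent, $\Diamond\Diamond\gamma\to\Diamond\gamma$ finishes at once. The analogy with the star case of Lemma~\ref{extension} also breaks down. There $\llbracket\pi^*\rrbracket$ is \emph{defined} as a closure, so one only needs consistent successors to stay inside $\GG$. Here $R^\canon_\Box$ is defined by consistency itself, and you need the opposite implication.

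What the axioms do give cheaply is this: if $W_1\land\Diamond W_2\not\vdash\bot$, then $W_1$ and $W_2$ contain the same formulas $\Box\chi$. Indeed, if $\Box\chi\in W_1$ and $\neg\Box\chi\in W_2$, then $W_1\land\Diamond W_2\vdash\Box\Box\chi\land\Diamond\neg\Box\chi\vdash\bot$. The other case uses $\neg\Box\chi\to\Box\neg\Box\chi$. So $R^\canon_\Box\subseteq{\equiv_\Box}$, where $W\equiv_\Box V$ means $W$ and $V$ agree on the $\Box$-formulas of $\FL(\Gamma)$.

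Transitivity of $R^\canon_\Box$ as literally defined is a different matter. It is a joint-consistency claim the axioms do not hand you directly. Establishing it needs a genuine model-construction argument (e.g.\ merging $R_\Box$-classes), not axiom manipulation.

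The standard repair is to interpret $\Box$ by $\equiv_\Box$, or by the transitive closure of $R^\canon_\Box$; either is an equivalence relation by definition. Your item 3 plus the inclusion above then gives $v^\canon_\prog(\program{p})\subseteq{\equiv_\Box}$. Lemma~\ref{lemma_box_works} survives as well:
\begin{itemize}
\item Left to right: $\Box\gamma$ transfers along $\equiv_\Box$, and $T$ yields $\gamma$.
\item Right to left: the witness $W'$ built in the paper's proof satisfies $W\land\Diamond W'\not\vdash\bot$, hence $W\equiv_\Box W'$.
\end{itemize}
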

	\begin{proof}
		Given the comments in the preceding paragraph, it suffices to ensure that  for all $\program{p}\in\At_{\prog}$, $v_{\prog}(\program{p})\subseteq R_\Box$. To that end, let $\langle W_1,W_2\rangle\in v^\canon_\prog(\program{p})$. Then $W_1\land\langle\program{p}\rangle W_2\not\vdash\bot$. But then since $\vdash\langle\program{p}\rangle W_2\to\Diamond W_2$, we also have that $W_1\land\Diamond W_2\not\vdash\bot$. So $\langle W_1,W_2\rangle\in R^\canon_\Box$, and thus $v^\canon_\prog(\program{p})\subseteq R^\canon_\Box$. 
	\end{proof}
	
	\begin{lemma}\label{pdl_box_key_lemma}
		If $\gamma\in\FL^{\pm}(\Gamma)$, then $M^\canon(\Gamma),W\vDash\gamma$ iff $\gamma\in W$. 
	\end{lemma}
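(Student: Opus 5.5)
We prove the truth lemma by induction on the construction of $\gamma\in\FL^{\pm}(\Gamma)$, using the closure conditions on $\FL(\Gamma)$ to keep all relevant subformulas inside $\FL^{\pm}(\Gamma)$. Since every formula is either in $\FL(\Gamma)$ or the negation of such a formula, it is enough to handle $\gamma\in\FL(\Gamma)$ by induction on its complexity. Negations $\neg\gamma$ with $\gamma\in\FL(\Gamma)$ then follow from clause 1 of Lemma~\ref{connectiveswork} together with the case for $\gamma$.

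\textbf{Atomic and propositional cases.} For atoms $a$, the claim is just the definition of $v^\canon_\form$. For $\bot$, note that $\bot\notin W$ because $W$ is consistent, and $W\not\vDash\bot$ by the semantics. If $\neg\gamma\in\FL(\Gamma)$, then $\gamma\in\FL(\Gamma)$ as a subformula, so the induction hypothesis and Lemma~\ref{connectiveswork}(1) give the result. Disjunction is handled the same way using Lemma~\ref{connectiveswork}(2).

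\textbf{Program case.} Suppose $\langle\pi\rangle\gamma\in\FL(\Gamma)$. Then $\gamma\in\FL(\Gamma)$ as a subformula, so by the induction hypothesis $\gamma\in W'$ iff $M^\canon(\Gamma),W'\vDash\gamma$. By Lemma~\ref{programswork}, $\langle\pi\rangle\gamma\in W$ iff some $W'$ with $\langle W,W'\rangle\in\llbracket\pi\rrbracket^\canon$ contains $\gamma$. That is exactly the truth clause for $\langle\pi\rangle\gamma$, provided the relation $\llbracket\pi\rrbracket^\canon$ used in Lemma~\ref{programswork} is the one the model assigns to $\pi$.

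This proviso is the delicate point. A program $\pi$ may contain tests $?\psi$, and $\llbracket ?\psi\rrbracket$ is defined semantically as $\{\langle s,s\rangle : M,s\vDash\psi\}$. The proof of Lemma~\ref{programswork} instead treats the test case via membership $\psi\in W$, using Lemma~\ref{provability_facts}(vi). To reconcile the two, order the induction by a measure in which each tested formula $\psi$ occurring in $\pi$ comes before $\langle\pi\rangle\gamma$; ordinary formula size works, since $\psi$ is a proper subformula. Each such $\psi$ lies in $\FL(\Gamma)$ by the test closure clause. For these $\psi$, the induction hypothesis gives $M^\canon(\Gamma),W\vDash\psi$ iff $\psi\in W$. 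Hence, on every program built from the sub-programs of $\pi$ (which the FL clauses keep in range), the semantic relation agrees with the syntactic one used in Lemmas~\ref{extension} and~\ref{programswork}. I expect this bookkeeping to be the main obstacle, though it is standard.

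\textbf{Box case.} Suppose $\Box\gamma\in\FL(\Gamma)$. If $\Box\gamma$ is consistent, Lemma~\ref{lemma_box_works} together with the induction hypothesis for $\gamma$ gives the equivalence directly. If $\Box\gamma\vdash\bot$, then $\Box\gamma\notin W$ because $W$ is consistent. Also $\vdash\neg\Box\gamma$, i.e.\ $\vdash\Diamond\neg\gamma$, and so $\neg\gamma\vdash\bot$ fails. By Lemma~\ref{equiv_lemma}, $\Diamond\neg\gamma$ is provably equivalent to $\Diamond\bigvee_{\neg\gamma\in W'}W'$. Since this is a theorem and $W$ is consistent, $W\land\Diamond W'\not\vdash\bot$ for some $W'$ containing $\neg\gamma$. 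Thus $\langle W,W'\rangle\in R^\canon_\Box$ and $\gamma\notin W'$. By the induction hypothesis $M^\canon(\Gamma),W'\not\vDash\gamma$, hence $M^\canon(\Gamma),W\not\vDash\Box\gamma$. This matches $\Box\gamma\notin W$.

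Lemma~\ref{its_a_model} ensures that the structure being evaluated is a genuine PDL$^\Box$-model, which completes the induction.
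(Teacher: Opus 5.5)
Your proof is correct and follows the paper's route: induction on $\gamma$, with the $\neg$ and $\lor$ cases from Lemma~\ref{connectiveswork}, the $\langle\pi\rangle$ case from Lemma~\ref{programswork}, and the $\Box$ case from Lemma~\ref{lemma_box_works}. The paper's one-line argument leaves two details implicit, and you supply both: applying the induction hypothesis to tested formulas so that the semantic test relation matches the membership-based one those lemmas use, and treating separately the case $\Box\gamma\vdash\bot$ that Lemma~\ref{lemma_box_works} excludes.
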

	\begin{proof}
		By induction on $\gamma$. The base case is immediate and the induction steps for $\neg$ and $\lor$ are straightforward. The induction step for $\langle\pi\rangle$ is an immediate consequence of Lemma~\ref{programswork} and the inductive hypothesis. The induction step for $\Box$ is an immediate consequence of Lemma~\ref{lemma_box_works} and the inductive hypothesis.
	\end{proof}

	\begin{theorem}
		PDL$^\Box$ is complete for PDL$^\Box$-models.
	\end{theorem}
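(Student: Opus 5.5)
The plan is the standard contrapositive argument through the finite canonical model. Suppose $\not\vdash\gamma$; I will produce a PDL$^\Box$-model and a state at which $\gamma$ fails. Since $\not\vdash\gamma$, the formula $\neg\gamma$ is consistent: otherwise $\vdash\neg\gamma\to\bot$, so $\vdash\gamma$ by propositional reasoning. Take $\Gamma=\{\gamma\}$, so that $\FL(\Gamma)$ and hence $S(\Gamma)$ are finite by the Fact. Then $\neg\gamma\in\FL^{\pm}(\Gamma)$ and $\neg\gamma\not\vdash\bot$, so Lemma~\ref{equiv_lemma} gives $\vdash\neg\gamma\otto\bigvee_{\neg\gamma\in W\in S(\Gamma)}\bigwedge W$. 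If that disjunction were empty, it would be $\bot$, and $\neg\gamma$ would be inconsistent. Hence there is some $W\in S(\Gamma)$ with $\neg\gamma\in W$.

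Next I apply Lemma~\ref{its_a_model}: $M^\canon(\Gamma)$ is a PDL$^\Box$-model. By Lemma~\ref{pdl_box_key_lemma}, applied to $\neg\gamma\in\FL^{\pm}(\Gamma)$, we get $M^\canon(\Gamma),W\vDash\neg\gamma$, so $\gamma$ is not valid. This gives weak completeness: every valid formula is a theorem. For a finite premise set I would reduce to this case via the deduction theorem, Lemma~\ref{provability_facts}(ii).

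The main thing to watch is the side condition in Lemma~\ref{lemma_box_works}, which Lemma~\ref{pdl_box_key_lemma} relies on at the $\Box$ step. That lemma requires $\Box\delta\not\vdash\bot$ for subformulas $\Box\delta$ in $\FL^{\pm}(\Gamma)$. If $\Box\delta\vdash\bot$, then $\Box\delta$ belongs to no $W$, because each $W$ is consistent. In that case I must also check that the semantic clause fails at every $W$, i.e.\ that every $W$ sees some $W'$ with $\delta\notin W'$.

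To handle this, I use $\Box\delta\vdash\bot$ to get $\vdash\Diamond\neg\delta$ and then reuse the second half of the proof of Lemma~\ref{lemma_box_works}: from $W\land\Diamond\neg\delta\not\vdash\bot$, Lemma~\ref{equiv_lemma} yields an $R^\canon_\Box$-successor $W'$ of $W$ containing $\neg\delta$. One subtlety is that $\neg\delta$ itself might be inconsistent. But then $\vdash\delta$, so $\vdash\Box\delta$ by necessitation, which would make $\bot$ a theorem. That is impossible, since the system is sound and hence consistent, so this case cannot occur.

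I expect this bookkeeping to be the only real obstacle. The remainder is just assembling the earlier lemmas.
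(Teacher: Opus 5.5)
Your proof is correct and is the paper's own argument: assume $\not\vdash\gamma$, find $W\in S(\{\gamma\})$ containing $\neg\gamma$, and use Lemmas~\ref{its_a_model} and~\ref{pdl_box_key_lemma} to refute $\gamma$ at $W$ in $M^\canon(\{\gamma\})$. Your extra treatment of the case $\Box\delta\vdash\bot$ is sound and covers a point the paper's truth lemma passes over silently; in effect it shows that the hypothesis $\Box\gamma\not\vdash\bot$ in Lemma~\ref{lemma_box_works} can be dropped.
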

	\begin{proof}
		Suppose that $\not\vdash \gamma$. Then $\neg \gamma$ is consistent. So there is a world $W$ in $M^\canon(\{\gamma\})$ such that $M^\canon(\{\gamma\}),W\vDash\neg \gamma$. So $M^\canon(\{\gamma\}),W\not\vDash \gamma$. Thus $\gamma$ is not valid.
	\end{proof}
	
\section{Adding Probabilities Part 1: Semantics}

	In the remainder, we will refer to formulas in the language of PDL$^\Box$ as \emph{ground formulas}. The language we now turn to contains \emph{terms} of the form `$\Pr(\gamma)$' where $\gamma$ is a ground formula and `$\Pr$' is meant to be a `probability of' operator. The set of atomic formulas, $\At_1$, is the set of all expressions of the form $\Pr(\gamma)\geq q$ where $\gamma$ is a ground formula and $0\leq q\leq 1$ is a rational number. The language $\LL_1$ is set of the formulas $\gamma_1$ given in BNF as $\gamma_1 := \At_1 \mid \neg\gamma \mid \gamma\lor\gamma$

	We adopt abbreviations for $\land$, $\to$, and $\otto$ as expected, as well as of the following. It can be (but won't be) shown that all of these behave as expected.
	\setlength{\columnsep}{0pt}\begin{multicols}{2}
		\begin{itemize}[leftmargin=*]
			\item $q\leq\Pr(\gamma)~:=~\Pr(\gamma)\geq q$
			\item $\Pr(\gamma)<q~:=~\neg\Pr(\gamma)\geq q$
			\item $q>\Pr(\gamma)~:=~\Pr(\gamma)<q$
			\item $q\geq\Pr(\gamma)~:=~\Pr(\neg\gamma)\geq 1-q$
			\item $\Pr(\gamma)\leq q~:=~q\geq\Pr(\gamma)$
			\item $q=\Pr(\gamma)~:=~\Pr(\gamma)=q$
			\item $\Pr(\gamma)=q~:=~\Pr(\gamma)\geq q\land q\geq\Pr(\gamma)$
			\item $\Pr(\gamma)> q~:=~\Pr(\gamma)\geq q\land\neg\Pr(\gamma)=q$
			\item $q<\Pr(\gamma)~:=~\Pr(\gamma)>q$
		\end{itemize}
	\end{multicols}    
	\noindent Note that $\LL_1$ does not allow dynamic modals to occur outside probability operators, or iterated probability operators. Probabilistic logics based on languages like $\LL_1$ are investigated in Chapter 3 of \cite{ognjanovic2016probability} among many other places.

	\subsection{Models}

		A PEDAL model is a five-tuple $\langle S, v_{\form}, R_\Box, \Pi,\mu\rangle$ where the initial triple is an S5-model, $\Pi$ is a partition of $\Val^S_{\prog}$---the set of $R_\Box$-compatible functions from $\At_\prog$ to $2^{S\times S}$---into disjoint sets and $\mu$ is a real-valued function from $\Pi$ to $[0,1]$ for which $\sum_{P\in\Pi}\mu(P)=1$. 

		The tail pair $\langle\Pi,\mu\rangle$ represents the epistemic state of our agent Anne. To see how, it helps to imagine Anne's investigations of $\program{p}$ have left her with beliefs of the form `I am 80\% sure that $\program{p}$ instantiates transition relation $R$, but think there is a 20\% chance that it instead instantiates transition relation $S$'. Being sensible, Anne also knows that $\program{q}$ does instantiate some transition relation or other. So we should take her claim `I am 80\% sure that $\program{p}$ instantiates transition relation $R$' as elliptical for a much longer claim of the form ``I am 80\% sure that (either $\program{p}$ instantiates $R$ and $\program{q}$ instantiates $T_1$ or $\program{p}$ instantiates $R$ and $\program{q}$ instantiates $T_2$ or \dots or $\program{p}$ instantiates $R$ and program $\program{q}$ instantiates $T_n$.)''

		Of course, the elliptical we \emph{actually} take Anne to be endorsing says similar things about all the other program variables as well. Letting $V$ be the set of program valuations that map $\program{p}$ to $R$, then we can capture this by saying that Anne's claim to be 80\% sure that program $\program{p}$ instantiates transition relation $R$ is actually elliptical for the claim that she is 80\% sure that the true program valuation is a member of $V$.
		
		Now suppose (implausibly) that Anne's epistemic state is \emph{exhausted} by her thinking it 80\% likely that $\program{p}$ instantiates $R$ and 20\% likely that it instead instantiates $S$. Then we could model Anne's epistemic state by (a) the three-element partition $\{\{v: v(p)=R\},\{v:v(p)=S\},\{\text{everything else}\}\}$ and (b) the function that maps the first set to 0.8, the second to 0.2, and the third to 0.
		
		Of course, in general Anne's epistemic state will be much more complicated. But while this complication will make the model more complex, it won't change the details of how the model works---it will remain the case that the cells in $\Pi$ represent sets of valuations Anne is indifferent between, and that the function $\mu$ assigns to each cell a probability that represents Anne's degree of belief in the claim that the true program valuation is found in that cell. 
		
		Throughout we will enforce the following assumptions: (1) $\Pi$ is finite---that is, partitions $\Val^S_{\prog}$ into only finitely many cells; (2) $\mu$ is supported only on finite cells---that is, for each $P\in\Pi$, $\mu(P)\neq 0$ only if $P$ is finite. The first of these reflects the fact that Anne is finite and, as a result, doesn't have beliefs about infinitely many different things at once. The second reflects the fact that the space of possible transition relations on machines she can practically build---or, if you prefer, which can fit in the universe---is also finite.

	\subsection{Semantics}

		The measure $\mu$ is defined on sets of valuations. The formulas in $\LL_1$ attribute probabilities to formulas. So in order to provide semantics for $\LL_1$, we first need to somehow or other understand $\mu$ as assigning probabilities to formulas. 
		
		It's clear enough what we \emph{want} to have happen here: let $M=\langle S,v_{\form},R_\Box,\Pi,\mu\rangle$ be a PEDAL-model, $\gamma$ be a ground formula, and $s\in S$. For each $v\in\Val^S_\prog$, let $M(v)$ be the PDL$^\Box$-model $\langle S,v_{\form},R_\Box,v\rangle$ and write $\Val^S_\prog(s,\gamma)$ for $\{v\in\Val^S_\prog:M(v),s\vDash\gamma\}$. Then what we \emph{want} is for $\mu(s,\gamma)$ to be $\mu(\Val^S_\prog(s,\gamma))$. The problem is that this doesn't quite make sense---we have no good reason to expect $\Val^S_\prog(s,\gamma)$ to be a member of $\Pi$. And since $\mu$ isn't defined anywhere else, $\mu$ will, in general, be undefined when applied to $\Val^S_\prog(s,\gamma)$.

		The solution to this problem is to exploit Anne's indifference. Recall that the individual valuations in any particular cell in a partition are valuations that we understand Anne to be (epistemically) indifferent between. So even though Anne has \emph{explicitly} assigned probabilities only to \emph{sets of valuations}, it is plausible to take her indifference to mean that she has \emph{implicitly} assigned a probability to each \emph{particular valuation}. Concretely, given a cell $P\in\Pi$ and a valuation $v\in P$, we will take the probability, $\mu(\{v\})$, that Anne assigns to the particular valuation $v$ to be $\frac{\mu(P)}{|P|}$. Note that our finiteness conditions guarantee that this is always well-defined. Note also that this gives us a well-defined assignment of probabilities to \emph{every} subset $V$ of $\Val^S_\prog$ by saying that $\mu(V):=\sum_{\mu(P)\neq 0}\left(\mu(P)\cdot\frac{|P\cap V|}{|P|}\right)$. Finally, note that since $\Pi$ partitions $\Val^S_\prog$, this sum can be rewritten as $\sum_{v\in V}\mu(\{v\})$. From this latter representation, the following lemma (and its corollary) are essentially immediate:

		\begin{lemma}
			If $X$ and $Y$ are finite subsets of $\Val^S_\prog$ and $X\cap Y=\emptyset$, then $\mu(X\cup Y)=\mu(X)+\mu(Y)$.
		\end{lemma}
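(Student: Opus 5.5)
The plan is to work from the second representation of $\mu$ on subsets given just before the statement, namely $\mu(V)=\sum_{v\in V}\mu(\{v\})$, and to reduce additivity to the elementary fact that a finite sum over a disjoint union splits into the sum over each piece. First I would confirm that this representation actually agrees with the defining formula $\mu(V)=\sum_{\mu(P)\neq 0}\mu(P)\cdot\frac{|P\cap V|}{|P|}$. For each cell $P$ with $\mu(P)\neq 0$, $P$ is finite by assumption (2). Since every $v\in P\cap V$ receives the weight $\mu(\{v\})=\frac{\mu(P)}{|P|}$, the term $\mu(P)\cdot\frac{|P\cap V|}{|P|}$ equals $\sum_{v\in P\cap V}\mu(\{v\})$.

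Valuations lying in cells of measure zero have $\mu(\{v\})=0$, so they contribute nothing. Since $\Pi$ is a partition, each $v\in V$ belongs to exactly one cell. Summing over the finitely many cells (assumption (1)) therefore gives $\sum_{v\in V}\mu(\{v\})$. I read $\mu(\{v\})$ as $0$ when $v$ lies in a measure-zero cell, including an infinite one, so the singleton weight is well defined everywhere.

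With that in hand, take finite disjoint $X,Y\subseteq\Val^S_\prog$. Then $X\cup Y$ is finite, and every $v\in X\cup Y$ lies in exactly one of $X$ and $Y$. Hence
\[\mu(X\cup Y)=\sum_{v\in X\cup Y}\mu(\{v\})=\sum_{v\in X}\mu(\{v\})+\sum_{v\in Y}\mu(\{v\})=\mu(X)+\mu(Y).\]
Alternatively, one can argue cell by cell: $|P\cap(X\cup Y)|=|P\cap X|+|P\cap Y|$ for each $P$, and linearity of the finite sum over cells gives the same result. This version avoids singletons altogether and needs finiteness only of $\Pi$ and of the cells with nonzero measure.

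The only real obstacle is the bookkeeping in justifying the rewriting as $\sum_{v\in V}\mu(\{v\})$: handling zero-measure and possibly infinite cells, and making sure all sums are finite. To sidestep it, I would present the cell-by-cell argument as the official proof, since it uses the definition of $\mu(V)$ directly. The finiteness hypothesis on $X$ and $Y$ is then not even needed, because the intersections with the finite nonzero-measure cells are finite automatically.
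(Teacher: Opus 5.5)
Your proposal is correct and follows the paper's own route. The paper observes that $\mu(V)=\sum_{v\in V}\mu(\{v\})$, since $\Pi$ partitions $\Val^S_\prog$, and reads additivity off that representation; your cell-by-cell version via $|P\cap(X\cup Y)|=|P\cap X|+|P\cap Y|$ is the same computation. Your remark that the finiteness of $X$ and $Y$ is unnecessary is worth keeping, because Corollary~\ref{oneminus} applies the lemma to $\Val^S_\prog(s,\gamma)$ and $\Val^S_\prog(s,\neg\gamma)$, which are in general infinite.
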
 

		\begin{cor}\label{oneminus}
			If $M$ is a PEDAL-model, $s$ is an $M$-state, and $\gamma$ is a ground formula, then $\mu(s,\gamma)+\mu(s,\neg\gamma)=1$	
		\end{cor}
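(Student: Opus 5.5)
The plan is to reduce the corollary to the preceding lemma by splitting the set of all valuations according to whether $\gamma$ holds at $s$. By definition, $\mu(s,\gamma)=\mu(\Val^S_\prog(s,\gamma))$ and $\mu(s,\neg\gamma)=\mu(\Val^S_\prog(s,\neg\gamma))$.

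First, for each $v\in\Val^S_\prog$, $M(v)$ is a PDL$^\Box$-model, so by the truth clause for negation, $M(v),s\vDash\neg\gamma$ iff $M(v),s\not\vDash\gamma$. Hence $\Val^S_\prog(s,\gamma)$ and $\Val^S_\prog(s,\neg\gamma)$ are disjoint, and together they cover $\Val^S_\prog$. It then suffices to show $\mu(\Val^S_\prog)=1$. Using the representation $\mu(V)=\sum_{\mu(P)\neq 0}\mu(P)\cdot\frac{|P\cap V|}{|P|}$ with $V=\Val^S_\prog$, each summand is $\mu(P)$. Adding back the cells with $\mu(P)=0$ changes nothing, so $\mu(\Val^S_\prog)=\sum_{P\in\Pi}\mu(P)=1$ by the definition of a PEDAL model.

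The step I expect to be the main obstacle is applying additivity. The preceding lemma is stated only for \emph{finite} $X,Y$, but $\Val^S_\prog$ may be infinite (for instance, when $S$ is infinite). I would sidestep the finiteness hypothesis by proving additivity directly from the cell-wise formula. For each cell $P$ with $\mu(P)\neq 0$, $P$ is finite by assumption (2), and $|P\cap(X\cup Y)|=|P\cap X|+|P\cap Y|$ for disjoint $X,Y$. Since $\Pi$ is finite by assumption (1), the sum over cells is a finite sum, so the cell-wise equality sums to $\mu(X\cup Y)=\mu(X)+\mu(Y)$ for \emph{arbitrary} disjoint $X,Y$. This is exactly the argument behind the lemma, and finiteness of $X,Y$ plays no role in it.

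Combining the pieces gives $\mu(s,\gamma)+\mu(s,\neg\gamma)=\mu(\Val^S_\prog)=1$.
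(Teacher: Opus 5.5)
Your proposal is correct and follows the paper's route: $\Val^S_\prog(s,\gamma)$ and $\Val^S_\prog(s,\neg\gamma)$ partition $\Val^S_\prog$, and additivity together with $\mu(\Val^S_\prog)=\sum_{P\in\Pi}\mu(P)=1$ gives the result. Proving additivity directly for arbitrary disjoint sets is not just convenient but necessary: because $\At_\prog$ is infinite, $\Val^S_\prog$ is always infinite, and so is every nonempty $\Val^S_\prog(s,\gamma)$, so the lemma's finiteness hypothesis never actually covers the case needed here.
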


		Truth in a PEDAL model is defined as follows:
		\begin{multicols}{2}\begin{itemize}
			\item $M,s\vDash \Pr(\gamma) \geq q$ iff $\mu(s,\gamma) \geq q$.
			\item $M,s\vDash \gamma\lor\psi$ iff $M,s\vDash\gamma$ or $M,s\vDash\psi$.
			\item $M,s\vDash\neg\gamma$ iff $M,s\not\vDash\gamma$.
		\end{itemize}\end{multicols}
		\noindent Say that $A$ is valid when $M,s\vDash A$ for all PEDAL-models $M$ and all $M$-states $s$.

			\begin{lemma}\label{oddity}
				If $\gamma$ is a dynamic-modal free formula (dmff), $M=\langle S,v_\form,R_\Box,\Pi,\mu\rangle$ is a PEDAL-model, $s\in S$, and $\{v,w\}\subseteq\Val^S_\prog$, then $M(v),s\vDash\gamma$ iff $M(w),s\vDash\gamma$
			\end{lemma}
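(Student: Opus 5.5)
The plan is a structural induction on $\gamma$, where ``dynamic-modal free'' means $\gamma$ is built from atoms, $\bot$, $\neg$, $\lor$ and $\Box$, with no subformula of the form $\langle\pi\rangle\delta$. The point to exploit is that $M(v)=\langle S,v_\form,R_\Box,v\rangle$ and $M(w)=\langle S,v_\form,R_\Box,w\rangle$ differ only in their last component, the program valuation. The satisfaction clauses for formulas without dynamic modals never consult that component. So I will prove the stronger claim that for every dmff $\gamma$ and every $t\in S$, $M(v),t\vDash\gamma$ iff $M(w),t\vDash\gamma$. The induction has to range over all states, not just the fixed $s$, because the $\Box$ clause moves to other states.

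\textbf{Base cases.} For $a\in\At_\form$, $M(v),t\vDash a$ iff $t\in v_\form(a)$ iff $M(w),t\vDash a$, since both models share $v_\form$. For $\bot$, neither model satisfies it at $t$.

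\textbf{Boolean steps.} For $\neg\gamma$ and $\gamma_1\lor\gamma_2$, the clauses are truth-functional in the values of the immediate subformulas at the same state $t$. The inductive hypothesis therefore transfers directly.

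\textbf{Modal step.} For $\Box\gamma$, $M(v),t\vDash\Box\gamma$ iff $M(v),u\vDash\gamma$ for all $u$ with $\langle t,u\rangle\in R_\Box$. The relation $R_\Box$ is the same in both models. By the inductive hypothesis at each such $u$, this condition is equivalent to $M(w),u\vDash\gamma$ for all such $u$, that is, to $M(w),t\vDash\Box\gamma$.

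There is no case for $\langle\pi\rangle\gamma$, since such formulas are excluded. This exclusion is exactly where the lemma would fail otherwise: $\llbracket\pi\rrbracket$ depends on $v$. It also covers tests $?\delta$, which can occur only inside a dynamic modal.

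The only point requiring care, and the main obstacle such as it is, is stating the induction hypothesis uniformly over all states so that the $\Box$ case goes through. Once that is done, every case is immediate. Taking $t=s$ then yields the lemma.
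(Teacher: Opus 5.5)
Your proposal is correct and matches the paper's proof, which says only ``by induction on $\gamma$.'' You strengthen the induction hypothesis to range over all states $t\in S$. That is the detail the paper leaves implicit, and it is what makes the $\Box$ step go through.
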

			\begin{proof}
				By induction on $\gamma$.
			\end{proof}

			\begin{cor}
				If $\gamma$ is a dmff, then either $\mu(s,\gamma)=1$ or $\mu(s,\gamma)=0$.
			\end{cor}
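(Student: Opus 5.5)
By Lemma~\ref{oddity}, since $\gamma$ is a dmff, whether $M(v),s\vDash\gamma$ holds does not depend on the choice of $v\in\Val^S_\prog$. So I would fix an arbitrary $w\in\Val^S_\prog$ and split into two cases according to whether $M(w),s\vDash\gamma$. One caveat first: $\Val^S_\prog$ is nonempty, because the valuation sending every atomic program to $\emptyset$ is trivially $R_\Box$-compatible. This guarantees such a $w$ exists.

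\emph{Case 1: $M(w),s\vDash\gamma$.} By Lemma~\ref{oddity}, $M(v),s\vDash\gamma$ for every $v\in\Val^S_\prog$, so $\Val^S_\prog(s,\gamma)=\Val^S_\prog$. Then $\mu(s,\gamma)=\mu(\Val^S_\prog)=\sum_{\mu(P)\neq 0}\mu(P)\cdot\frac{|P\cap\Val^S_\prog|}{|P|}=\sum_{\mu(P)\neq 0}\mu(P)=\sum_{P\in\Pi}\mu(P)=1$. Here $P\cap\Val^S_\prog=P$ because $\Pi$ partitions $\Val^S_\prog$. The division is legitimate because cells with $\mu(P)\neq0$ are finite by assumption (2), and nonempty since they are cells of a partition.

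\emph{Case 2: $M(w),s\not\vDash\gamma$.} By Lemma~\ref{oddity} again, no $v$ satisfies $\gamma$ at $s$, so $\Val^S_\prog(s,\gamma)=\emptyset$. Every summand $\mu(P)\cdot\frac{|P\cap\emptyset|}{|P|}$ is then $0$, giving $\mu(s,\gamma)=0$. Alternatively, note that $\neg\gamma$ is also a dmff and holds under every valuation, so Case 1 gives $\mu(s,\neg\gamma)=1$, and Corollary~\ref{oneminus} yields $\mu(s,\gamma)=0$.

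There is no real obstacle. The only point requiring care is unpacking the definition of $\mu$ on arbitrary sets of valuations and checking that $\mu(\Val^S_\prog)=1$. That reduces to the normalization $\sum_{P\in\Pi}\mu(P)=1$ together with the finiteness assumptions, which ensure every term is well defined.
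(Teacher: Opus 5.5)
Your proposal is correct and follows essentially the same route as the paper: Lemma~\ref{oddity} forces $\Val^S_\prog(s,\gamma)$ to be either all of $\Val^S_\prog$ or $\emptyset$, so $\mu(s,\gamma)$ is $1$ or $0$. Your extra checks (that $\Val^S_\prog$ is nonempty, and evaluating the sum via normalization and the finiteness assumptions) fill in details the paper leaves implicit.
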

%			\begin{proof}
%				Let $\gamma$ be any dmff. Then $\mu(s,\gamma)=\mu(\Val^S_\prog(s,\gamma))$. But since $\gamma$ is free of dynamic modals, $M(v),s\vDash\gamma$ iff $M(w),s\vDash\gamma$ for all $v$ and $w$. It follows that either $\Val^S_\prog(s,\gamma)=\Val^S_\prog$ or $\Val^S_\prog(s,\gamma)=\emptyset$. 
%			\end{proof}

\section{Adding Probabilities Part 2: An Axiomatization}

	We axiomatize PEDAL as follows, where $\gamma$ ranges over ground formulas and $r_1$ and $r_2$ over $[0,1]\cap\mathbb{Q}$.
	\begin{enumerate}[({A}1), leftmargin=*]
		\item All $\LL_1$-instances of propositional tautologies.
		\item $\Pr(\gamma)\geq 0$.
		\item $\Pr(\gamma)=1\lor\Pr(\gamma)=0$ if $\gamma$ is a dmff.
		\item $\Pr(\gamma)\leq r_1\to\Pr(\gamma)<r_2$ when $r_1<r_2$.
		\item $\Pr(\gamma)<r\to\Pr(\gamma)\leq r$
		\item $(\Pr(\gamma_1)\geq r_1\land\Pr(\gamma_2)\geq r_2\land\Pr(\gamma_1\land\gamma_2)=0)\to\Pr(\gamma_1\lor\gamma_2)\geq \min(1,r_1+r_2)$
		\item $(\Pr(\gamma_1)\leq r_1\land\Pr(\gamma_2)<r_2)\to\Pr(\gamma_1\lor\gamma_2)<r_1+r_2$ when $r_1+r_2\leq 1$
		\item[(R1)] $A$ and $A\to B$ infer $B$.
		\item[(R2)] From $\vdash_{\text{PDL}^\Box}\gamma$ infer $\Pr(\gamma)= 1$
		\item[(R3)] For $r>0$, from $A\to\Pr(\gamma)\geq r-\frac{1}{k}$ for all $k\geq\frac{1}{r}$, infer $A\to\Pr(\gamma)\geq r$.
	\end{enumerate}
	Note that R3 is an infinitary rule. We say that $X\vdash A$ just if there is a sequence $B_1,\dots, B_{\lambda+1}=A$ (where $\lambda$ is \emph{a finite or countable ordinal}) such that for $1\leq i\leq\lambda+1$, either (i) $B_i$ is an axiom or (ii) there are $j<i$ and $k<i$ so that $B_j$ is $B_k\to B_i$ or (iii) for some $\gamma$, $\vdash_{\text{PDL}^\Box}\gamma$ and $B_i$ is $\Pr(\gamma)=1$ or (iv) for all $k\geq\frac{1}{r}$ there is $j_k<i$ so that $B_k$ is $A\to\Pr(\gamma)\geq r-\frac{1}{k}$ and $B_i$ is $A\to\Pr(\gamma)\geq r$. We say $\vdash A$ just when $\emptyset\vdash A$.
	
	\begin{theorem}
		PEDAL is sound for PEDAL-models.
	\end{theorem}
	\begin{proof}
		By straightforward induction on derivations. %Of the axioms, note that A3 follows from (the discussion following) Lemma~\ref{oddity}. Checking the other axioms is entirely straightforward; for the sake of demonstrating this, we show how to check A6.\footnote{Note that in the statement of (A6), the `$\min(1,r_1+r_2)$ is present for well-formedness reasons. In any instance of (A6) with a true antecedent, it isn't needed. But there are instances of (A6) with false antecedents where the absence of the $\min$ would leave us with non-formulas; e.g.
%			$(\Pr(\gamma_1)\geq 1\land\Pr(\gamma_2)\geq 1\land\Pr(\gamma_1\land\gamma_2)=0)\to\Pr(\gamma_1\lor\gamma_2)\geq \min(1,2)$.}
%		
%		Suppose $\mu(s,\gamma_1)\geq r_1$, and $\mu(s,\gamma_2)\geq r_2$ and that $\mu(s,\gamma_1\land\gamma_2)=0$. Recall that for finite $P$, $|P\cap(Q\cup R)|=|P\cap Q|+|P\cap R|-|P\cap(Q\cap R)|$. We compute $\mu(s,\gamma_1\lor\gamma_2)$ as follows:
%		{\small\begin{align*}
%		\mu(s,\gamma_1\lor\gamma_2)
%			& = \sum_{\mu(P)\neq 0}\mu(P)\frac{|P\cap\Val^S_\prog(s,\gamma_1\lor\gamma_2)|}{|P|} \\
%			& = \sum_{\mu(P)\neq 0}\mu(P)\frac{|P\cap(\Val^S_\prog(s,\gamma_1)\cup\Val^S_\prog(s,\gamma_2))|}{|P|} \\
%			& = \sum_{\mu(P)\neq 0}\mu(P)\frac{|P\cap \Val^S_\prog(s,\gamma_1)|+|P\cap\Val^S_\prog(s,\gamma_2)|-|P\cap\Val^S_\prog(s,\gamma_1)\cap\Val^S_\prog(s,\gamma_2)|}{|P|} \\
%			& = \mu(s,\gamma_1)+\mu(s,\gamma_2)-\mu(s,\gamma_1\land\gamma_2) \\
%			& = \mu(s,\gamma_1)+\mu(s,\gamma_2)\\
%			& \geq r_1+r_2.			
%		\end{align*}}
%		
%		
%		That rule R1 preserves validity is clear. For R2, suppose $\vdash_{\text{PDL}^\Box}\gamma$. Then $\Val^S_\prog(s,\gamma)=\Val^s_\prog$. So $\mu(s,\gamma)=1$ as required. For R3, let $r>0$ and suppose that $M,s\not\vDash A\to\Pr(\gamma)\geq r$. Then $M,s\vDash A$ but $\mu(s,\gamma)<r$. Choose $k\geq\frac{1}{r}$ so that $\frac{1}{k}<r-\mu(s,\gamma)$. Then $\mu(s,\gamma)<r-\frac{1}{k}$, so $M,s\not\vDash A\to\Pr(\gamma)\geq r-\frac{1}{k}$. 
	\end{proof}
	
	\subsection{Completeness}
	
	We defined $\FL(X)$ above only with respect to $X\subseteq\LL$. If $X$ is a subset of $\LL_1$ rather than $\LL$ then we define $\FL(X)$ to be $\FL(\{\gamma:\Pr(\gamma)\geq q$ occurs in $A$ for some $A\in X\})$. We can then define $\FL^{\pm}(X)$ and $S(X)$ in the same way we did for PDL$^\Box$. Note that it remains the case that if $X$ is finite, then so is $\FL(X)$.
	
	Say that two members $W_1$ and $W_2$ of $S(X)$ are dmff-compatible when they contain exactly the same dmffs. Given $W\in S(X)$, write $\dmff(W)$ for the set of all members of $S(X)$ that are dmff-compatible with $W$. A \emph{section} of $S(X)$ is a function $\sigma:S(X)\to S(X)$ mapping each $W\in S(X)$ to a member of $\dmff(W)$. Each section $\sigma$ gives rise to a program valuation $v_\sigma\in\Val^{S(X)}_\prog$ defined by $v_\sigma(\program{p})=\{\langle W_1,W_2\rangle: \sigma(W_1)\land\langle\program{p}\rangle \sigma(W_2)\not\vdash\bot\}$. We write $\llbracket\pi\rrbracket_\sigma$ for the usual recursive extension of $v_\sigma$ to arbitrary programs. 
	
	Some intuition is helpful at this point. Let $M$ be a PDL$^\Box$- (not PEDAL-) model. Each of $M$'s states can be thought of as containing two types of information: non-dynamic information and dynamic information. Distinct states in $\dmff(W)$ contain the same non-dynamic information, but disagree about dynamic information. The fact that \emph{there are} these different possibilities is sometimes crucial to the canonical model behaving as it does. This leaves us with something of a dilemma when it comes to our completeness proof. On the one hand, for our canonical model to behave correctly, we will, typically, need to keep all of these distinct-but-non-dynamically-indistinguishable points around. On the other hand, in PEDAL-models, dynamic behavior of points isn't \emph{fixed} by a model. Instead, PEDAL models explicitly leave all sorts of options open by not choosing a program valuation. Concretely, we say that \emph{with respect to the section $\sigma$}, execution of $\program{p}$ can take us from $W_1$ to $W_2$ not when $W_1\land\langle\program{p}\rangle W_2\not\vdash\bot$ (as we did in the canonical model for PDL$^\Box$) but instead when $\sigma(W_1)\land\langle\program{p}\rangle \sigma(W_2)\not\vdash\bot$. 
	
	One should think about sections as a way of navigating this dilemma in the following way: when working with a section, you first `forget' everything dynamic about your states, and then you replace this information with the dynamic information delivered by the section. More helpfully, in the construction below, what will happen is that the modal behavior of each point $W$ in our canonical models will be determined not by the modal information $W$ actually does contain but instead by the modal information in $\sigma(W)$ where $\sigma$ is an appropriate section.
	
	By keeping the members of $S(X)$ as the points in our model, we ensure that we have sufficiently many points (and, in particular, sufficiently many locally indistinguishable points) on hand. But by tying the dynamic behavior of our points to sections rather than to their actual contents, we ensure that the `anything goes' flavor of PEDAL's approach to program valuations is preserved. 
	
	Note that if $\sigma$ is a section then the restriction, $\sigma\vert_W$, of $\sigma$ to $\dmff(W)$ is a function $\dmff(W)\to\dmff(W)$. We say that a section is \emph{canonical} when for all $W$, the function $\sigma\vert_W$ is a bijection. Let $\can(X)$ be the set of all canonical sections of $S(X)$ and let $\delta(W)$ be the conjunction of all dmffs in $W$.
	
	\begin{lemma}
		$\langle W,V\rangle\in\llbracket\pi\rrbracket^\canon$ iff $\langle\sigma^{-1}(W),\sigma^{-1}(V)\rangle\in\llbracket\pi\rrbracket_\sigma$
	\end{lemma}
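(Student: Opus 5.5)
The plan is a simultaneous induction on programs $\pi$ and on ground formulas $\gamma$ in $\FL^{\pm}(X)$, with $\sigma\in\can(X)$ fixed. The companion claim for formulas is
\[M(v_\sigma),\sigma^{-1}(W)\vDash\gamma \iff M^\canon(X),W\vDash\gamma,\]
where $M(v_\sigma)=\langle S(X),v^\canon_\form,R^\canon_\Box,v_\sigma\rangle$. The formula claim is only needed because of tests.

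First I would record that $\sigma$ is a bijection of $S(X)$. The sets $\dmff(W)$ partition $S(X)$, a section maps each cell into itself, and a canonical section is a bijection on each cell. Hence $\sigma^{-1}$ is well defined, and $\sigma^{-1}(W)\in\dmff(W)$ for every $W$. This is what lets us rewrite every statement about $\llbracket\pi\rrbracket_\sigma$ in terms of $\llbracket\pi\rrbracket^\canon$.

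The program cases go as follows.
\begin{itemize}
\item Atomic $\program{p}$: by definition of $v_\sigma$, $\langle\sigma^{-1}(W),\sigma^{-1}(V)\rangle\in v_\sigma(\program{p})$ iff $\sigma\sigma^{-1}(W)\land\langle\program{p}\rangle\sigma\sigma^{-1}(V)\not\vdash\bot$. That is, iff $W\land\langle\program{p}\rangle V\not\vdash\bot$, which is the definition of $\langle W,V\rangle\in v^\canon_\prog(\program{p})$.
\item Union: immediate from the induction hypothesis.
\item Composition: if $\langle W,U\rangle\in\llbracket\pi_1\rrbracket^\canon$ and $\langle U,V\rangle\in\llbracket\pi_2\rrbracket^\canon$, then $\sigma^{-1}(U)$ is the required intermediate point for $\llbracket\pi_1;\pi_2\rrbracket_\sigma$. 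Conversely, an intermediate point $U'$ on the $\sigma$-side gives the intermediate point $\sigma(U')$ on the canonical side. Surjectivity of $\sigma$ is used here.
\item Star: iterate the composition argument along finite chains, by an inner induction on chain length as in Lemma~\ref{programswork}.
\end{itemize}
So on programs built without tests, $\sigma$ is an isomorphism from the $\sigma$-frame to the canonical frame.

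The test case $?\gamma$ reduces to the formula claim. Both relations are subsets of the identity, and $\langle\sigma^{-1}(W),\sigma^{-1}(V)\rangle$ lies on the diagonal iff $W=V$, so it is enough to know that $\gamma$ holds at $\sigma^{-1}(W)$ in $M(v_\sigma)$ iff it holds at $W$ in $M^\canon(X)$. I would prove this by induction on $\gamma$.
\begin{itemize}
\item Atoms: since $\sigma^{-1}(W)$ and $W$ are dmff-compatible and atoms are dmffs, they contain the same atoms.
\item Boolean cases: routine.
\item $\langle\pi\rangle\gamma'$: this uses the program half of the induction together with the bijectivity of $\sigma$.
\end{itemize}

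The step I expect to be the main obstacle is $\Box\gamma'$. It needs $R^\canon_\Box$ to be compatible with $\sigma$, in the sense that
\[\langle \sigma^{-1}(W_1),U\rangle\in R^\canon_\Box \iff \langle W_1,\sigma(U)\rangle\in R^\canon_\Box,\]
so that the $R_\Box$-neighbours on the two sides correspond under $\sigma$. My first attempt would exploit that $W$ and $\sigma^{-1}(W)$ agree on all dmffs, in particular on all formulas $\Box\delta$ and $\Diamond\delta$ with $\delta$ dynamic-modal free. I would then try to show that, in the S5 canonical model built from $\FL^{\pm}(X)$, the $R^\canon_\Box$-relation between cells $\dmff(W_1)$ and $\dmff(W_2)$ is determined by this dmff information. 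If that fails in general, the fallback is to keep the same induction for programs and formulas and to strengthen the statement with an explicit invariance hypothesis on $R^\canon_\Box$, under which the $\Box$ case is immediate.
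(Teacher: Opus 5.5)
Your treatment of atomic programs, $\cup$, $;$ and ${}^*$ is correct, and it is essentially all the paper's proof does. The paper checks atomic programs and then appeals to ``the exact same recursive extension,'' tacitly treating $\llbracket ?\gamma\rrbracket$ as if it did not depend on the model. You rightly see that tests bring in formula truth. But the $\Box$ step you isolate is a genuine gap that neither of your options closes, because the statement itself fails once tests may contain $\Box$.

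\emph{Why the first attempt fails.} $R^{\canon}_\Box$ is not governed by dmffs. By the S5 axioms, $W\land\Diamond V$ is inconsistent as soon as $W$ and $V$ disagree on some $\Box\psi\in\FL(X)$. Such a $\psi$ may contain dynamic modals, so a canonical section can move a state into a different $R^{\canon}_\Box$-class.

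\emph{A counterexample.} Let $X=\{\Pr(\Box[\program{p}]a)\geq\frac{1}{2}\}$. States of $S(X)$ are determined by the values of $a$, $\langle\program{p}\rangle\neg a$ and $\Box[\program{p}]a$. The only dmffs in $\FL^{\pm}(X)$ are $a,\neg a,\neg\neg a$.
\begin{enumerate}
\item Let $W_1$ and $W_2$ be the states containing $a$ and $[\program{p}]a$ together with $\Box[\program{p}]a$, resp.\ $\neg\Box[\program{p}]a$. They are dmff-compatible, so some canonical $\sigma$ swaps them and fixes every other state.
\item By Lemma~\ref{pdl_box_key_lemma}, $\langle W_1,W_1\rangle\in\llbracket ?\Box[\program{p}]a\rrbracket^{\canon}$.
\item Let $W_3$ contain $a,\langle\program{p}\rangle\neg a,\neg\Box[\program{p}]a$, and let $W_5$ contain $\neg a,[\program{p}]a,\neg\Box[\program{p}]a$. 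Both $W_2\land\Diamond W_3$ and $W_3\land\langle\program{p}\rangle W_5$ are satisfiable in a three-point model, hence consistent.
\item Since $\sigma$ fixes $W_3$ and $W_5$, we get $\langle W_2,W_3\rangle\in R^{\canon}_\Box$ and $\langle W_3,W_5\rangle\in v_\sigma(\program{p})$.
\item Hence $M(\sigma),W_2\not\vDash\Box[\program{p}]a$, i.e.\ $\langle\sigma^{-1}(W_1),\sigma^{-1}(W_1)\rangle\notin\llbracket ?\Box[\program{p}]a\rrbracket_\sigma$.
\end{enumerate}
Your compatibility condition fails exactly here: $\langle\sigma^{-1}(W_1),W_3\rangle\in R^{\canon}_\Box$ but $\langle W_1,\sigma(W_3)\rangle\notin R^{\canon}_\Box$.

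\emph{The fallback.} It proves a different, conditional statement whose hypothesis the paper's construction does not satisfy. $R_\Box$ is shared by all valuations of a PEDAL-model, so it cannot be re-indexed along $\sigma$. The same example even gives $\langle W_4,W_2\rangle\in v_\sigma(\program{p})$ for the state $W_4$ containing $\neg a,[\program{p}]a,\Box[\program{p}]a$, so $v_\sigma(\program{p})\not\subseteq R^{\canon}_\Box$.

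\emph{What is actually established.} Your argument, and the paper's reduction to atomic programs, prove the lemma for programs in which $\Box$ does not occur inside tests. There your simultaneous induction closes:
\begin{itemize}
\item atoms are handled by dmff-compatibility (atoms outside $\FL(X)$ are false everywhere in both models);
\item the Boolean cases are trivial;
\item $\langle\pi\rangle$ follows from the program half of the induction plus bijectivity of $\sigma$.
\end{itemize}
The unrestricted statement needs a change in the construction, not a better proof.
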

	\begin{proof}
		$\llbracket -\rrbracket^\canon$ and $\llbracket -\rrbracket_\sigma$ are defined by the exact same recursive extension of what they do to atomic programs. So it suffices to ensure that for all atomic programs $\program{p}$, $\langle W,V\rangle\in\llbracket\program{p}\rrbracket^\canon$ iff $\langle\sigma^{-1}(W),\sigma^{-1}(V)\rangle\in\llbracket\program{p}\rrbracket_\sigma$. But this is easily established since $\langle\sigma^{-1}(W),\sigma^{-1}(V)\rangle\in\llbracket\program{p}\rrbracket_\sigma$ iff $\sigma(\sigma^{-1}(W))\land\langle\program{p}\rangle\sigma(\sigma^{-1}(V))\not\vdash\bot$ iff $W\land\langle\program{p}\rangle V\not\vdash\bot$ iff $\langle W,V\rangle\in\llbracket\program{p}\rrbracket^\canon$.
	\end{proof}
	
	\begin{lemma}\label{canonical}
		Let $X$ be a finite, consistent set of $\LL_1$-formulas and let $X'=\{\gamma:\Pr(\gamma)\geq q$ occurs in $A$ for some $A\in X\}$. Let $M^\canon(X')$ be the canonical PDL$^\Box$-model for $X'$. For each canonical section $\sigma$, let $M(\sigma)$ be the model $\langle S(X),v^\canon_\form,v_\sigma,R^\canon_\Box\rangle$. Then $M^\canon(X'),W\vDash \gamma$ iff $M(\sigma),\sigma^{-1}(W)\vDash \gamma$. 
	\end{lemma}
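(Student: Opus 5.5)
Since a canonical section $\sigma$ is a bijection on each $\dmff(W)$, and these classes partition $S(X)$, $\sigma$ is a bijection of $S(X)$ onto itself. The plan is to show that $\sigma^{-1}$ is an isomorphism of PDL$^\Box$-models from $M^\canon(X')$ onto $M(\sigma)$. (The two models have the same carrier, since $S(X)=S(X')$ by the definition of $\FL$ for $\LL_1$-sets.) Truth preservation then follows by a routine induction on $\gamma$. Concretely, I would prove by simultaneous induction on formulas $\gamma$ and programs $\pi$ that $M^\canon(X'),W\vDash\gamma$ iff $M(\sigma),\sigma^{-1}(W)\vDash\gamma$, and that $\langle W,V\rangle\in\llbracket\pi\rrbracket^\canon$ iff $\langle\sigma^{-1}(W),\sigma^{-1}(V)\rangle\in\llbracket\pi\rrbracket_\sigma$.

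The first step is to check that $\sigma^{-1}$ respects the non-dynamic structure, namely $v^\canon_\form$ and $R^\canon_\Box$. Since $\sigma^{-1}(W)\in\dmff(W)$, the two states contain exactly the same dmffs. Atomic formulas are dmffs, so $a\in W$ iff $a\in\sigma^{-1}(W)$; this gives the atomic case. For $R^\canon_\Box$, I would argue that $\langle W,V\rangle\in R^\canon_\Box$ depends only on the dmff-classes of $W$ and $V$. The reason is that $R^\canon_\Box$ is an equivalence relation, and by Lemma~\ref{lemma_box_works} membership of every $\Box\gamma\in\FL^\pm$ is determined by the $R^\canon_\Box$-class. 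The delicate part is that $\Box\gamma$ may contain dynamic modals, so it is not itself a dmff. I would handle this through the S5 behaviour: $W\land\Diamond V\not\vdash\bot$ holds iff $\delta(W)\land\Diamond\delta(V)$-type information is consistent. More cleanly, I would aim to show that $\langle W,V\rangle\in R^\canon_\Box$ iff $\langle \sigma^{-1}W,\sigma^{-1}V\rangle\in R^\canon_\Box$ by proving that every $\Diamond$-formula in $W$ appears in $\sigma^{-1}(W)$. \textbf{I expect this to be the main obstacle.} If boxed formulas with embedded programs are not dmffs, dmff-compatibility does not obviously fix them, and I may need to use that each $R^\canon_\Box$-class is closed under $\dmff(\cdot)$, which comes from the axiom $\langle\pi\rangle\gamma\to\Diamond\gamma$ together with S5.

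The second step is the program case, which is the preceding lemma: agreement on atomic programs holds by the definition of $v_\sigma$ with $\sigma(\sigma^{-1}W)=W$, and the recursive clauses coincide. Test programs fall under the simultaneous induction, since $\llbracket ?\gamma\rrbracket$ is determined by the truth of $\gamma$, which the formula hypothesis transports along $\sigma^{-1}$.

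The formula induction is then routine. The $\neg$ and $\lor$ cases are immediate. For $\langle\pi\rangle\gamma$: $W\vDash\langle\pi\rangle\gamma$ iff there is $V$ with $\langle W,V\rangle\in\llbracket\pi\rrbracket^\canon$ and $V\vDash\gamma$. By the program clause and the inductive hypothesis, this holds iff $\langle\sigma^{-1}W,\sigma^{-1}V\rangle\in\llbracket\pi\rrbracket_\sigma$ and $\sigma^{-1}V\vDash\gamma$ in $M(\sigma)$. Surjectivity of $\sigma^{-1}$ then covers every witness in $M(\sigma)$. The $\Box$ case is the same argument using the first step for $R^\canon_\Box$.
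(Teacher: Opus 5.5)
The atomic, program and $\langle\pi\rangle$ steps of your plan are fine. Your simultaneous induction even handles tests and the converse direction more carefully than the paper does, since the paper's program lemma ignores that test clauses depend on the model. The real problem is the step you flagged as the main obstacle: it is not merely hard but false, and the lemma as stated is false with it. None of your proposed routes can work, because $R^{\canon}_{\Box}$ is not determined by dmff-classes.

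Here is a counterexample. Take $X=\{\Pr(\Box\langle\program{p}\rangle a)\geq\frac{1}{2}\}$, so $\FL(X)=\{\Box\langle\program{p}\rangle a,\langle\program{p}\rangle a,a\}$ and the only dmffs in $\FL^{\pm}(X)$ are $a$ and $\neg a$. Define
\begin{itemize}
\item $W_1=\{\Box\langle\program{p}\rangle a,\langle\program{p}\rangle a,a\}$,
\item $W_2=\{\Box\langle\program{p}\rangle a,\langle\program{p}\rangle a,\neg a\}$,
\item $W_3=\{\neg\Box\langle\program{p}\rangle a,\langle\program{p}\rangle a,a\}$,
\item $W_4=\{\neg\Box\langle\program{p}\rangle a,\neg\langle\program{p}\rangle a,a\}$.
\end{itemize}
By S5 (using $\Box\phi\to\Box\Box\phi$ and $\neg\Box\phi\to\Box\neg\Box\phi$), $R^{\canon}_{\Box}$-related states agree on $\Box\langle\program{p}\rangle a$. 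Small models show $\langle W_1,W_2\rangle,\langle W_3,W_4\rangle\in R^{\canon}_{\Box}$. The states $W_1,W_3,W_4$ form the dmff-class of $a$, so the map $\sigma$ swapping $W_1$ and $W_3$ and fixing everything else is a canonical section. Each of your auxiliary claims now fails:
\begin{itemize}
\item $\langle W_1,W_2\rangle\in R^{\canon}_{\Box}$ but $\langle\sigma^{-1}(W_1),\sigma^{-1}(W_2)\rangle=\langle W_3,W_2\rangle\notin R^{\canon}_{\Box}$.
\item The class $\{W_1,W_2\}$ is not closed under $\dmff(\cdot)$.
\item The $\Diamond$-formula $\neg\Box\langle\program{p}\rangle a$ lies in $W_3$ but not in $\sigma^{-1}(W_3)=W_1$.
\end{itemize}
Truth is not preserved either. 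By Lemma~\ref{pdl_box_key_lemma}, $M^{\canon}(X'),W_1\vDash\Box\langle\program{p}\rangle a$. In $M(\sigma)$, however, the state $\sigma^{-1}(W_1)=W_3$ sees $W_4=\sigma(W_4)\ni\neg\langle\program{p}\rangle a$. So every $v_\sigma(\program{p})$-successor $V$ of $W_4$ has $\neg a\in\sigma(V)$, hence $\neg a\in V$, and therefore $M(\sigma),W_3\not\vDash\Box\langle\program{p}\rangle a$. In fact $M(\sigma)$ is not even a PDL$^\Box$-model: $\langle W_3,W_2\rangle\in v_\sigma(\program{p})$ because $W_1\land\langle\program{p}\rangle W_2\not\vdash\bot$, yet $\langle W_3,W_2\rangle\notin R^{\canon}_{\Box}$.

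The paper's proof does not supply a missing idea here. It examines only the $\langle\pi\rangle$ case and never treats $\Box$, which is exactly where the statement breaks. $M(\sigma)$ moves the program relations along $\sigma$ but keeps $v^{\canon}_{\form}$ and $R^{\canon}_{\Box}$ fixed, and only $v^{\canon}_{\form}$ is invariant under dmff-compatible permutations.

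What your isomorphism argument does establish, verbatim, is the lemma for canonical sections that are also automorphisms of $R^{\canon}_{\Box}$, i.e.\ $\langle W,V\rangle\in R^{\canon}_{\Box}$ iff $\langle\sigma(W),\sigma(V)\rangle\in R^{\canon}_{\Box}$. That restriction also yields $v_\sigma(\program{p})\subseteq R^{\canon}_{\Box}$. It has a cost, though: it forfeits the equality $\{\sigma(W):\sigma\in\can(X)\}=\dmff(W)$ on which Lemma~\ref{cut_down_lemma} rests. In the example, no such $\sigma$ can send $W_1$ to $W_3$, because their $R^{\canon}_{\Box}$-classes have two and four elements respectively.
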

	\begin{proof}
		By induction on $\gamma$. Since $\sigma$ is canonical, $\sigma^{-1}(W)$ is dmff-compatible with $W$. It follows that $W\in v^\canon_\form(p)$ iff $\sigma^{-1}(W)\in v^\canon_\form(p)$, from which the base case follows immediately. For the induction, only the $\langle\pi\rangle$-case is worth explicitly examining. 
		
		Note that $M^\canon(X'),W\vDash\langle\pi\rangle\gamma$ iff for some $V\in S(X')$, $\langle W,V\rangle\in\llbracket\pi\rrbracket^\canon$ and $M^\canon(X'),V\vDash\gamma$. But then by the previous lemma and the inductive hypothesis, $\langle \sigma^{-1}(W),\sigma^{-1}(V)\rangle\in\llbracket\pi\rrbracket_\sigma$ and $M(\sigma),\sigma^{-1}(V)\vDash\gamma$. So $M(\sigma),\sigma^{-1}(W)\vDash\langle\pi\rangle\gamma$.
	\end{proof}
	
	\begin{lemma}\label{cut_down_lemma}
		For all $\gamma\in\FL^{\pm}(X)$, $\vdash(\delta(W)\land\gamma)\otto\bigvee\{\sigma(W):\sigma\in\can(X)\text{ and }\gamma\in\sigma(W)\}$.
	\end{lemma}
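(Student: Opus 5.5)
The plan is to reduce the right-hand disjunction to a disjunction over a concrete subset of $S(X)$, and then apply Lemma~\ref{equiv_lemma} together with a dmff-compatibility argument.

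\emph{Step 1: identify the indexing set.} I will show that
\[
\{\sigma(W):\sigma\in\can(X)\}=\dmff(W).
\]
The inclusion $\subseteq$ holds by the definition of a section. For $\supseteq$, fix $V\in\dmff(W)$. Let $\sigma$ be the map that swaps $W$ and $V$ and fixes every other member of $S(X)$. This $\sigma$ is a section, since it maps each $U$ into $\dmff(U)$: dmff-compatibility is an equivalence relation, so $\dmff(V)=\dmff(W)$. It is also canonical, since on each class $\dmff(U)$ it restricts either to the identity or to a transposition, and both are bijections. Hence the right-hand side of the lemma is literally $\bigvee\{V\in\dmff(W):\gamma\in V\}$.

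\emph{Step 2: expand $\gamma$.} If $\gamma\vdash\bot$, then $\gamma$ lies in no $V$, so both sides are inconsistent and the biconditional is provable. Otherwise Lemma~\ref{equiv_lemma} gives
\[
\vdash\gamma\otto\bigvee_{\gamma\in V}V,
\]
and distributing yields
\[
\vdash(\delta(W)\land\gamma)\otto\bigvee_{\gamma\in V}(\delta(W)\land V).
\]

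\emph{Step 3: split the disjuncts according to whether $V\in\dmff(W)$.*}
If $V\in\dmff(W)$, then every dmff of $W$ is in $V$, so $\vdash V\to\delta(W)$ and the disjunct $\delta(W)\land V$ reduces to $V$.
If $V\notin\dmff(W)$, then some dmff $\varphi\in\FL^{\pm}(X)$ lies in exactly one of $W$ and $V$. By Lemma~\ref{connectiveswork} and maximality, the other set then contains the complementary formula ($\neg\varphi$, or $\psi$ when $\varphi=\neg\psi$), and that formula is again a dmff. It follows that $\delta(W)\land V\vdash\bot$, so these disjuncts can be dropped. What remains is exactly $\bigvee\{V\in\dmff(W):\gamma\in V\}$, which by Step 1 is the right-hand side.

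I expect the main obstacle to be Step 1, that is, checking that canonical sections really reach every member of $\dmff(W)$. The transposition construction settles this, because canonicity only requires a bijection on each class. A lesser bookkeeping issue is the complementation argument in Step 3 when the distinguishing dmff is itself a negation; Lemma~\ref{connectiveswork} handles that case.
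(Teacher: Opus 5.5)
Your proof is correct and follows the same route as the paper. The paper asserts that $\{\sigma(W):\sigma\in\can(X)\}=\dmff(W)$ ``by construction'' and then appeals to Lemma~\ref{equiv_lemma}; your transposition argument for that identity, your handling of the case $\gamma\vdash\bot$, and your explicit removal of the dmff-incompatible disjuncts fill in details the paper leaves implicit.
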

	\begin{proof}
		By construction, $\{\sigma(W):\sigma\in\can(X)\}$ is exactly $\dmff(W)$. The result then follows from Lemma~\ref{equiv_lemma} essentially immediately. 
	\end{proof}
        
   \begin{lemma}\label{it_all_works}
   		Let $\sigma$ be a canonical section of $S(X)$ and write $M^\canon_\PEDAL(X)(v_\sigma)$ for the tuple $\langle S(X), v^\canon_\form,v_\sigma,R^\canon_\Box\rangle$.
   		\begin{enumerate}
			\item If $\neg\gamma\in\FL^{\pm}(X)$ then $\gamma\not\in\sigma(W)$ iff $\neg\gamma\in \sigma(W)$;
			\item If $\gamma_1\lor\gamma_2\in\FL^{\pm}(X)$ then $\gamma_1\lor\gamma_2\in\sigma(W)$ iff $\gamma_1\in\sigma(W)$ or $\gamma_2\in\sigma(W)$. 
			\item If $W\in S(X)$ and $W'\in S(X)$, then for all programs $\pi$, if $\sigma(W)\land\langle\pi\rangle\sigma(W')\not\vdash\bot$, then $\langle W,W'\rangle\in \llbracket\pi\rrbracket_\sigma$.
	    	\item For all $W\in S(X)$, if $\langle\program\pi\rangle\gamma\in\FL^{\pm}(X)$ then $\langle\pi\rangle\gamma\in \sigma(W)$ iff for some $W'\in S(X)$, $\langle W,W'\rangle\in \llbracket\pi\rrbracket_\sigma$ and $\gamma\in \sigma(W')$.
			\item For all $W\in S(X)$, $\Box\gamma\in\sigma(W)$ iff for all $W'\in S(X)$, if $\langle W,W'\rangle\in R^\canon_\Box$ then $\gamma\in\sigma(W')$.
			\item $M^\canon_\PEDAL(X)(v_\sigma)$ is a PDL$^\Box$-model.
		\end{enumerate}
	\end{lemma}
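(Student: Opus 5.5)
The plan is to reduce each clause to its counterpart in the PDL$^\Box$ development (Lemmas~\ref{connectiveswork}, \ref{extension}, \ref{programswork}, \ref{lemma_box_works}, \ref{its_a_model}). The key observation is that $\sigma(W)$ is itself a member of $S(X)$, and a canonical section is a bijection on each $\dmff(W)$, hence a bijection of $S(X)$ that preserves dmff-compatibility. Clauses 1 and 2 are then immediate: they are Lemma~\ref{connectiveswork} applied to the maximal consistent set $\sigma(W)$.

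For clause 3, I will induct on $\pi$, mimicking Lemma~\ref{extension} with every state $V$ replaced by $\sigma(V)$. The base case is the definition of $v_\sigma$. Test and choice are routine. For composition, I use Lemma~\ref{equiv_lemma} to split the intermediate formula $\langle\pi_2\rangle\sigma(W')$ into a disjoint union of members of $S(X)$. Since $\sigma$ is surjective onto $S(X)$, each such member can be written $\sigma(W'')$, and this yields the middle point $W''$. For star, I take $\GG$ to be the least set containing $W$ and closed under $V\mapsto V'$ whenever $\sigma(V)\land\langle\pi\rangle\sigma(V')\not\vdash\bot$. I then set $Y=\bigvee_{g\in\GG}\sigma(g)$. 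Because $\sigma$ is a bijection of $S(X)$, Lemma~\ref{in_out_lemma} gives $\neg Y\otto\bigvee_{h\notin\GG}\sigma(h)$. From there the induction-axiom argument goes through verbatim.

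Alternatively, and more economically, I can use the lemma relating $\llbracket-\rrbracket^\canon$ and $\llbracket-\rrbracket_\sigma$. Since $\sigma(W)\land\langle\pi\rangle\sigma(W')\not\vdash\bot$, Lemma~\ref{extension} gives $\langle\sigma(W),\sigma(W')\rangle\in\llbracket\pi\rrbracket^\canon$. The transfer lemma, applied to the pair $\sigma(W),\sigma(W')$, then returns $\langle W,W'\rangle\in\llbracket\pi\rrbracket_\sigma$. I will present this second route, with the direct induction as a fallback. The fallback is needed only if one worries that the transfer lemma is stated relative to $S(X')$ rather than $S(X)$; these coincide because $\FL(X)$ is defined through $X'$.

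Clause 4 follows the same way. The statement that $\langle\pi\rangle\gamma\in\sigma(W)$ holds iff there is some $V$ with $\langle\sigma(W),V\rangle\in\llbracket\pi\rrbracket^\canon$ and $\gamma\in V$ is exactly Lemma~\ref{programswork}. Writing $V=\sigma(W')$ by surjectivity and applying the transfer lemma turns this into the required condition on $\llbracket\pi\rrbracket_\sigma$.

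Clause 5 is the main obstacle. Here $R^\canon_\Box$ is evaluated on $W,W'$ themselves rather than on $\sigma(W),\sigma(W')$, so the bijection does not transfer the relation automatically. I will argue that $R^\canon_\Box$ depends only on dmff content. $R^\canon_\Box$ is an equivalence relation, and $\Box$-formulas and their negations are dmffs. So $W\land\Diamond W'$ should be consistent iff $W$ and $W'$ agree on all $\Box$-formulas in $\FL^\pm(X)$, given the S5 axioms together with Lemma~\ref{lemma_box_works}. I expect this to take the most care, possibly needing the fact that a consistent dmff set can be realized together with arbitrary dynamic content. Granting it, $\langle W,W'\rangle\in R^\canon_\Box$ iff $\langle\sigma(W),\sigma(W')\rangle\in R^\canon_\Box$, because $\sigma$ preserves dmff classes. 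Lemma~\ref{lemma_box_works} applied at $\sigma(W)$ then gives the clause, again using surjectivity of $\sigma$ within each dmff class.

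Finally, clause 6 follows as in Lemma~\ref{its_a_model}. Suppose $\langle W_1,W_2\rangle\in v_\sigma(\program p)$. Then $\sigma(W_1)\land\langle\program p\rangle\sigma(W_2)$ is consistent, so $\sigma(W_1)\land\Diamond\sigma(W_2)$ is consistent, which gives $\langle\sigma(W_1),\sigma(W_2)\rangle\in R^\canon_\Box$. By the dmff-invariance of $R^\canon_\Box$ just established, $\langle W_1,W_2\rangle\in R^\canon_\Box$. That $R^\canon_\Box$ is an equivalence relation is inherited unchanged from the PDL$^\Box$ case.
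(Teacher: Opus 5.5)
There is a genuine gap in clause~5, and clause~6 inherits it. Your premise that ``$\Box$-formulas and their negations are dmffs'' is false. A dmff contains no $\langle\pi\rangle$ anywhere, so $\Box\langle\program{p}\rangle a$ is not one, and neither is the paper's own $\Box(A\to[\program{p}]B)$. If it were, A3 would force its probability to be $0$ or $1$, contradicting the closing example where it is $0.6$. Your description of $R^\canon_\Box$ is right: two sets are related iff they agree on the $\Box$-formulas of $\FL^{\pm}(X)$. But a canonical section is free to permute sets that differ on exactly such $\Box$-formulas. So $R^\canon_\Box$ is not $\sigma$-invariant, and clauses~5 and~6 actually fail.

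Here is a counterexample. Take $X=\{\Pr(\Box\langle\program{p}\rangle a)\geq\frac{1}{2}\}$, so $\FL(X)=\{\Box\langle\program{p}\rangle a,\langle\program{p}\rangle a,a\}$. Write $W_{ijk}$ for the member of $S(X)$ with truth values $i,j,k$ for $a$, $\langle\program{p}\rangle a$, $\Box\langle\program{p}\rangle a$; the patterns $101$ and $001$ are inconsistent. The only dmffs are $a$ and $\neg a$, so the map $\sigma$ swapping $W_{111}$ and $W_{110}$ and fixing the rest is a canonical section.
\begin{itemize}
\item Clause~5 fails. We have $\Box\langle\program{p}\rangle a\in\sigma(W_{110})=W_{111}$. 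Also $\langle W_{110},W_{100}\rangle\in R^\canon_\Box$: take universal $R_\Box$ on $\{s,u\}$, both states satisfying $a$, with $\program{p}=\{\langle s,s\rangle\}$. Yet $\langle\program{p}\rangle a\notin\sigma(W_{100})=W_{100}$.
\item Clause~6 fails. $W_{111}\land\langle\program{p}\rangle W_{011}$ is consistent: take universal $R_\Box$ on $\{s,t\}$, with $s\vDash a$, $t\vDash\neg a$, and $\program{p}=\{\langle s,s\rangle,\langle s,t\rangle,\langle t,s\rangle\}$. Hence $\langle W_{110},W_{011}\rangle\in v_\sigma(\program{p})$. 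But $W_{110}\land\Diamond W_{011}\vdash\bot$, since $\Diamond\Box\varphi\to\Box\varphi$ is an S5 theorem. So $v_\sigma(\program{p})\not\subseteq R^\canon_\Box$.
\end{itemize}

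So the clause-5 step cannot be closed as stated; the lemma needs an extra hypothesis. One option is that every $\Box$-formula of $\FL(X)$ is a dmff; then your invariance argument goes through. Another is that $\sigma$ preserves $R^\canon_\Box$; then clause~5 is Lemma~\ref{lemma_box_works} at $\sigma(W)$ plus surjectivity of $\sigma$.

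You did locate the real crux. The paper's one-line proof hides the same hole in the unexamined $\Box$-case of Lemma~\ref{canonical}. That case needs exactly this invariance, and the same example refutes it: $M^\canon(X'),W_{111}\vDash\Box\langle\program{p}\rangle a$ but $M(\sigma),W_{110}\not\vDash\Box\langle\program{p}\rangle a$.

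This also undermines your ``economical'' route to clauses~3--4, because the transfer lemma's test case rests on Lemma~\ref{canonical}. In the example, clause~3 fails for $\pi={?}\Box\langle\program{p}\rangle a$ at $W=W'=W_{110}$. Clauses~1--2 are fine, and so is your reduction of clauses~3--4 for test-free programs.
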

	\begin{proof}\mbox{}
		Using Lemma~\ref{canonical}, these are straightforward consequences of Lemmas~\ref{connectiveswork}--\ref{pdl_box_key_lemma}. 
	\end{proof}	
	
	We also state the following results. Their proofs are entirely standard; see e.g. Theorems 3.1, 3.3, and 3.4 and Lemma 3.3 of \cite{ognjanovic2016probability}:\smallskip
	\begin{description}\setlength
		\item[Deduction Theorem] If $X,A\vdash B$, then $X\vdash A\to B$. 
		\item[Lindenbaum Lemma] If $X$ is a consistent set of $\LL_1$-formulas, then there is a maximal consistent set of $\LL_1$-formulas $X'\supseteq X$. 
		\item[Probability Rules] For all $s\geq r$, 
		\begin{itemize}[leftmargin=*]
			\item $\vdash \Pr(\gamma_1\to\gamma_2)=1\to(\Pr(\gamma_1)\geq s\to\Pr(\gamma_2)\geq s)$.
			\item If $\vdash\Pr(\gamma_1\otto\gamma_2)=1$, then $\vdash\Pr(\gamma_1)\geq s\otto\Pr(\gamma_2)\geq s$.
			\item $\vdash\Pr(\gamma_1)\geq s\to\Pr(\gamma_1)\geq r$.
			\item $\vdash\Pr(\gamma_1)\leq r\to\Pr(\gamma_1)\leq s$.
			\item If $T$ is a maximal consistent set of formulas and $Q=\sup\{q:\Pr(\gamma)\geq q\in T\}$ and $Q\in\mathbb{Q}$, then $\Pr(\gamma)\geq Q\in T$.
		\end{itemize}
	\end{description}\smallskip
	
	\begin{lemma}
		If $\{\gamma_i\}_{i=1}^n$ is a set of PDL$^\Box$-formulas and $\gamma_i\land\gamma_j\vdash_{\text{PDL}^\Box}\bot$ for $i\neq j$, then $ \vdash\bigwedge_{i=1}^n(\Pr(\gamma_i)\geq r_i)\to\Pr\left(\bigvee_{i=1}^n\gamma_i\right)\geq\sum_{i=1}^n r_i$	
	\end{lemma}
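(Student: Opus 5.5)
The plan is to argue by induction on $n$, using (A6) as the engine for the inductive step and (R2) to discharge the disjointness hypothesis that (A6) needs. One point has to be settled first. The right-hand side $\Pr(\cdot)\geq\sum r_i$ is only a formula when $\sum_{i=1}^n r_i\leq 1$, so I read the sum as $\min\bigl(1,\sum_{i=1}^n r_i\bigr)$, exactly as in (A6). The induction then works because $\min(1,\min(1,a)+b)=\min(1,a+b)$ for nonnegative $a,b$.

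For the base case $n=1$ the formula is $\Pr(\gamma_1)\geq r_1\to\Pr(\gamma_1)\geq r_1$, which is an instance of (A1). For the inductive step, put $\psi:=\bigvee_{i=1}^k\gamma_i$. The induction hypothesis gives
\[
\vdash\bigwedge_{i=1}^k(\Pr(\gamma_i)\geq r_i)\to\Pr(\psi)\geq\min\Bigl(1,\sum_{i=1}^k r_i\Bigr).
\]
Next I need $\vdash\Pr(\psi\land\gamma_{k+1})=0$. By hypothesis $\gamma_i\land\gamma_{k+1}\vdash_{\text{PDL}^\Box}\bot$ for every $i\leq k$. Distributing $\land$ over $\lor$ inside PDL$^\Box$, and using (iii) of Lemma~\ref{provability_facts} (which carries over to PDL$^\Box$), yields $\vdash_{\text{PDL}^\Box}\neg(\psi\land\gamma_{k+1})$. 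Rule (R2) then gives $\Pr(\neg(\psi\land\gamma_{k+1}))=1$, so in particular $\Pr(\neg(\psi\land\gamma_{k+1}))\geq 1$. Unfolding the abbreviation $q\geq\Pr(\gamma):=\Pr(\neg\gamma)\geq 1-q$ with $q=0$, this formula is literally $0\geq\Pr(\psi\land\gamma_{k+1})$. Combining it with (A2), namely $\Pr(\psi\land\gamma_{k+1})\geq 0$, gives $\Pr(\psi\land\gamma_{k+1})=0$ by the definition of $=$.

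Now apply (A6) with $\gamma_1:=\psi$, $\gamma_2:=\gamma_{k+1}$, $r_1:=\min(1,\sum_{i\leq k}r_i)$ and $r_2:=r_{k+1}$. Its antecedent follows from the induction hypothesis, the conjunct $\Pr(\gamma_{k+1})\geq r_{k+1}$, and the theorem just derived. The result is $\Pr(\psi\lor\gamma_{k+1})\geq\min(1,r_1+r_{k+1})=\min(1,\sum_{i\leq k+1}r_i)$. Chaining these implications with (A1) and (R1) gives the claim for $k+1$; note that $\psi\lor\gamma_{k+1}$ is syntactically $\bigvee_{i=1}^{k+1}\gamma_i$ under left-associated bracketing.

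The main obstacle is the bookkeeping rather than any conceptual step. One part is checking that the abbreviation for $\leq$ unfolds exactly to $\Pr(\neg(\psi\land\gamma_{k+1}))\geq 1$, so that (R2) plus (A2) really yield $\Pr(\psi\land\gamma_{k+1})=0$. The other is the handling of the $\min$ cap. If one prefers to read the statement literally, the lemma should be restricted to the case $\sum r_i\leq 1$, in which case every $\min$ above is just the sum and the argument is unchanged.
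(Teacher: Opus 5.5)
Your proof is correct and follows the paper's argument. It proceeds by induction on $n$, with the inductive step given by (A6) applied to $\bigvee_{i=1}^k\gamma_i$ and $\gamma_{k+1}$, after using (R2) to turn the PDL$^\Box$-inconsistency of $(\bigvee_{i=1}^k\gamma_i)\land\gamma_{k+1}$ into $\vdash\Pr((\bigvee_{i=1}^k\gamma_i)\land\gamma_{k+1})=0$. You fill in details the paper passes over: you unfold the abbreviations explicitly to obtain that $\Pr(\cdot)=0$ theorem, and you read the bound as $\min(1,\sum_{i=1}^n r_i)$ so that the conclusion is a well-formed formula.
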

	\begin{proof}
		By induction on $n$. 
		
%		The base case is immediate and the $n=2$ case follows from axiom A6 given that $\vdash\Pr(\gamma_1\land\gamma_2)=0$ follows from $\gamma_1\land\gamma_2\vdash_{\text{PDL}^\Box}\bot$. 
%		
%		Suppose $\vdash\bigwedge_{i=1}^k(\Pr(\gamma_i)\geq r_i)\to\Pr\left(\bigvee_{i=1}^k\gamma_i\right)\geq\sum_{i=1}^k r_i$. Then $\vdash\bigwedge_{i=1}^{k+1}(\Pr(\gamma_i)\geq r_i)\to\left(\Pr\left(\bigvee_{i=1}^k\gamma_i\right)\geq\sum_{i=1}^k r_i\land\Pr(\gamma_{k+1})\geq r_{k+1}\right)$. But since $\gamma_i\land\gamma_{k+1}\vdash\bot$ for $i\leq k$ we also have that $\left(\bigvee_{i=1}^k\gamma_i\right)\land\gamma_{k+1}\vdash\bot$. Thus $\vdash\Pr\left(\left(\bigvee_{i=1}^k\gamma_i\right)\land\gamma_{k+1}\right)=0$
%		
%		Altogether this gives us that
%		\begin{displaymath}
%			\vdash\bigwedge_{i=1}^{k+1}(\Pr(\gamma_i)\geq r_i)\to\left(\Pr\left(\bigvee_{i=1}^k\gamma_i\right)\geq\sum_{i=1}^k r_i\land\Pr(\gamma_{k+1})\geq r_{k+1}\land \Pr\left(\left(\bigvee_{i=1}^k\gamma_i\right)\land\gamma_{k+1}\right)=0\right).
%		\end{displaymath}
%		But the righthand side here is the antecedent of an instance of A6. So by propositional logic we get $\vdash\bigwedge_{i=1}^{k+1}(\Pr(\gamma_i)\geq r_i)\to\Pr\left(\bigvee_{i=1}^{k+1}\gamma_i\right)\geq\sum_{i=1}^{k+1} r_i$, which completes the induction. 
	\end{proof}
			
	\begin{lemma}\label{PEDAL_key}
		If $\gamma\in\FL^{\pm}(X)$, then $M^\canon_\PEDAL(X)(v_\sigma),W\vDash\gamma$ iff $\gamma\in\sigma(W)$. 
	\end{lemma}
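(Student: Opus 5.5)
We prove the statement by induction on the construction of $\gamma$, following the pattern of Lemma~\ref{pdl_box_key_lemma} but with every membership question about $W$ replaced by the corresponding question about $\sigma(W)$. The clauses of Lemma~\ref{it_all_works} are the section-relative versions of Lemmas~\ref{connectiveswork}--\ref{lemma_box_works}, so each case of the induction should reduce to one of them.

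For the base case, let $a\in\At_\form\cap\FL^{\pm}(X)$. Then $M^\canon_\PEDAL(X)(v_\sigma),W\vDash a$ iff $W\in v^\canon_\form(a)$ iff $a\in W$. Atomic formulas are dmffs and $\sigma(W)\in\dmff(W)$, so $a\in W$ iff $a\in\sigma(W)$. The case of $\bot$ is immediate, since no member of $S(X)$ contains $\bot$. The steps for $\neg$ and $\lor$ follow from clauses 1 and 2 of Lemma~\ref{it_all_works} together with the inductive hypothesis; closure of $\FL^{\pm}(X)$ under subformulas ensures the hypothesis applies to the immediate subformulas.

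For the $\langle\pi\rangle$ step, $M^\canon_\PEDAL(X)(v_\sigma),W\vDash\langle\pi\rangle\gamma$ iff there is $W'$ with $\langle W,W'\rangle\in\llbracket\pi\rrbracket_\sigma$ and $M^\canon_\PEDAL(X)(v_\sigma),W'\vDash\gamma$. Since $\gamma\in\FL(X)$, the inductive hypothesis turns the last conjunct into $\gamma\in\sigma(W')$, and clause 4 of Lemma~\ref{it_all_works} then gives that this holds iff $\langle\pi\rangle\gamma\in\sigma(W)$. The $\Box$ step is handled the same way using clause 5 and the definition of truth via $R^\canon_\Box$.

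The main obstacle is hidden inside the program case. The semantics of $\llbracket\program{?\psi}\rrbracket_\sigma$ refers to truth of $\psi$ in the model itself, so clause 4, and behind it clause 3, tacitly require the present lemma for test formulas $\psi$. I would therefore run the argument as a simultaneous induction on formulas and programs. The key point is that whenever $\langle\program{?\psi}\rangle\chi\in\FL(X)$, the test formula $\psi$ is also in $\FL(X)$ and is of lower complexity, so the hypothesis is available for it. In the test case of clause 3/4 we then have $\langle W,W\rangle\in\llbracket\program{?\psi}\rrbracket_\sigma$ iff $\psi$ is true at $W$ iff $\psi\in\sigma(W)$, which matches item (vi) of Lemma~\ref{provability_facts}.

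If the appeal to Lemma~\ref{it_all_works} turns out to be too loose for this, the fallback is to transport along $\sigma$ directly. By Lemma~\ref{canonical}, truth at $W$ in $M(\sigma)$ coincides with truth at $\sigma(W)$ in $M^\canon(X')$, and Lemma~\ref{pdl_box_key_lemma} says the latter holds iff $\gamma\in\sigma(W)$. This gives the result in one step once one checks that $M(\sigma)$ and $M^\canon_\PEDAL(X)(v_\sigma)$ are the same tuple, and that Lemma~\ref{canonical} applies with $\sigma(W)$ in place of $W$.
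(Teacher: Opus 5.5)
\textbf{The gap is in the $\Box$ step.} Your outline is the paper's own proof: induction on $\gamma$, with the $\langle\pi\rangle$ step via clause 4 of Lemma~\ref{it_all_works}. Your base case and your simultaneous-induction treatment of tests are correct, and more careful than the paper; the induction does have to be restricted to programs occurring in $\FL(X)$. But the $\Box$ step, ``handled the same way using clause 5'', fails: clause 5 of Lemma~\ref{it_all_works} is false, and so is the statement itself for some canonical $\sigma$.

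Take $X=\{\Pr(\Box\langle\program{p}\rangle a)\geq\frac{1}{2}\}$. Then $\FL(X)=\{\Box\langle\program{p}\rangle a,\langle\program{p}\rangle a,a\}$, and the only dmffs in $\FL^{\pm}(X)$ are $a$ and $\neg a$. Put
$U=\{\Box\langle\program{p}\rangle a,\langle\program{p}\rangle a,a\}$,
$V=\{\neg\Box\langle\program{p}\rangle a,\neg\langle\program{p}\rangle a,a\}$ and
$Z=\{\neg\Box\langle\program{p}\rangle a,\neg\langle\program{p}\rangle a,\neg a\}$.
\begin{enumerate}
\item All three lie in $S(X)$, and $U$ and $V$ are dmff-compatible. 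So the $\sigma$ that swaps $U$ and $V$ and fixes everything else is canonical, and $\Box\langle\program{p}\rangle a\in\sigma(V)=U$.
\item $\langle V,Z\rangle\in R^\canon_\Box$. A two-state model with universal $R_\Box$, empty $\program{p}$, and $a$ true at exactly one state satisfies $V\land\Diamond Z$.
\item $\langle\program{p}\rangle a$ is false at $Z$ in $M^\canon_\PEDAL(X)(v_\sigma)$. A $\program{p}$-step from $Z$ to some $W''\ni a$ would need $\sigma(Z)\land\langle\program{p}\rangle\sigma(W'')$ to be consistent. But $\neg\langle\program{p}\rangle a\in Z=\sigma(Z)$, and $a\in\sigma(W'')$ by dmff-compatibility.
\end{enumerate}
So $\Box\langle\program{p}\rangle a$ is false at $V$ even though it belongs to $\sigma(V)$.

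\textbf{Your fallback breaks at the same place.} The example also refutes the $\Box$ case of Lemma~\ref{canonical}: $\Box\langle\program{p}\rangle a$ holds at $U$ in $M^\canon(X')$, but not at $\sigma^{-1}(U)=V$ in $M(\sigma)$. Clause 6 fails too. Let $Y=\{\Box\langle\program{p}\rangle a,\langle\program{p}\rangle a,\neg a\}$. Then $U\land\langle\program{p}\rangle Y$ is consistent, so $\langle V,Y\rangle\in v_\sigma(\program{p})$. Yet $V\land\Diamond Y\vdash\bot$, so $v_\sigma$ is not even $R^\canon_\Box$-compatible.

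\textbf{Why this happens.} $R^\canon_\Box$ is computed from $W$ itself, while the truth of $\Box\varphi$ with a dynamic modal inside $\varphi$ is governed by $\sigma$. Canonical sections need not respect $R^\canon_\Box$: here $V$ and $\sigma(V)=U$ disagree on $\Box\langle\program{p}\rangle a$, so they lie in different $R^\canon_\Box$-classes. Formulas like $\Box(A\to[\program{p}]B)$ are exactly the paper's intended use, so this is not a corner case.

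The paper's proof has the same hole, since it dispatches the $\Box$ case ``as in Lemma~\ref{pdl_box_key_lemma}'', i.e.\ via clause 5. No argument for the lemma as stated can succeed. The repair has to be made in the construction rather than in the induction: either change which sections are admitted, or change how the frame underlying $M^\canon_\PEDAL(X)$ is built.
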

	\begin{proof}
		By induction on $\gamma$. All cases except the $\langle\pi\rangle$-case can be dealt with just as they were in Lemma~\ref{pdl_box_key_lemma}. For the $\langle\pi\rangle$-case, observe that $M^\canon_\PEDAL(v_\sigma),W\vDash\langle\pi\rangle\gamma$ iff for some $V\in S(X)$ we have both $\langle W,V\rangle\in\llbracket\pi\rrbracket_\sigma$ and $M^\canon_\PEDAL(v_\sigma),V\vDash\gamma$. But this happens iff (by the inductive hypothesis) $\gamma\in\sigma(V)$ which happens iff (by part 4 of Lemma~\ref{it_all_works}) $\langle\pi\rangle\gamma\in\sigma(W)$. 
	\end{proof}
	
	Let $X$ be a consistent set of $\LL_1$-formulas, $X'$ a maximal consistent extension of $X$, and $X\pda$ be the set of dmffs $\gamma$ in $\FL^{\pm}(X)$ such that $X'\vdash\Pr(\gamma)=1$. Note that since $X$ is consistent, axiom A3 guarantees that there is $W\in S(X)$ with exactly the same dmffs as $X\pda$. Choose one such and abuse notation to call it $X\pda$ too. Finally, where $W\in S(X)$, let $P_W=\{v_\sigma:\sigma\in\can(X)\text{ and }\sigma(X\pda)=W\}$.
	
	The canonical model $M^\canon_\PEDAL(X)$ is the tuple $\langle S(X),v^\canon_\form,R^\canon_\Box,\Pi^\canon_X,\mu^\canon_X\rangle$ where
	\begin{itemize}\setlength
		\item $v^\canon_{\form}(p)=\{W\in S(X) : p\in W\}$
		\item $R^\canon_\Box=\{\langle W_1,W_2\rangle: W_1\land\Diamond W_2\not\vdash\bot\}$
		\item $\Pi^\canon_X=\{P_W : W\in S(X)\}\cup\{\{\text{everything else}\}\}$.
		\item $\mu^\canon_X(P_W)=\sup\{q : X'\vdash\Pr(W)\geq q\}$.
	\end{itemize}
	The first three elements of this model are what one should expect them to be. The partition, $\Pi^\canon_X$, recognizes only the different ways the dynamic behavior of $X\pda$ might be determined and ignores everything else that might possibly vary in the canonical model. And, for each of these possibilities, the measure $\mu^\canon_X$ assigns it what is essentially exactly the probability that $X'$ says it should. What remains is to check that (a) this actually settles things in a way that makes sense and that (b) the result really does model $X$.
		
	\begin{lemma}
		If $X$ is finite and consistent, then $M^\canon_\PEDAL(X)$ is a PEDAL-model.
	\end{lemma}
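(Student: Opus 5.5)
The plan is to check the clauses in the definition of a PEDAL-model one at a time: (a) $\langle S(X),v^\canon_\form,R^\canon_\Box\rangle$ is an S5-model; (b) $\Pi^\canon_X$ is a finite partition of $\Val^{S(X)}_\prog$; (c) $\mu^\canon_X$ is supported only on finite cells; and (d) $\sum_{P\in\Pi^\canon_X}\mu^\canon_X(P)=1$. Clause (a) is inherited from the PDL$^\Box$ case: the argument preceding Lemma~\ref{its_a_model} shows $R^\canon_\Box$ is an equivalence relation, using the T and 5 axioms. For compatibility, part 6 of Lemma~\ref{it_all_works} gives $v_\sigma(\program{p})\subseteq R^\canon_\Box$ for each canonical $\sigma$, so every $P_W$ sits inside $\Val^{S(X)}_\prog$.

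For (b) and (c) I would use the Fact that $S(X)$ is finite when $X$ is finite. Then $\can(X)$ is finite, so each $P_W$ is finite and there are at most $|S(X)|$ of them. The real work is disjointness. If $\sigma(X\pda)=W$ and $\tau(X\pda)=W'$ with $W\neq W'$, I need $v_\sigma\neq v_\tau$. To get this, I would pick a formula in $W$ whose negation lies in $W'$. By Lemma~\ref{PEDAL_key}, $M^\canon_\PEDAL(X)(v_\sigma),X\pda\vDash W$ while $M^\canon_\PEDAL(X)(v_\tau),X\pda\not\vDash W$. Since the two models differ only in their program valuation, $v_\sigma\neq v_\tau$. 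Distinct $W$'s therefore give disjoint $P_W$'s, possibly some of them empty, which I discard. The residual cell ``everything else'' completes the partition, and it gets measure $0$ by definition. So support lies only on the finite cells $P_W$, and $\mu^\canon_X(P_W)$ is well defined as a sup of a nonempty set, by axiom A2.

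The main obstacle is (d). I would first observe that $P_W\neq\emptyset$ only for $W\in\dmff(X\pda)$. I would then argue that $\mu^\canon_X(P_W)=0$ whenever $W\notin\dmff(X\pda)$: such a $W$ disagrees with $X\pda$ on some dmff $\delta$ with $X'\vdash\Pr(\delta)=1$ (or $=0$), and the Probability Rules then give $X'\vdash\Pr(W)\leq 0$. The remaining cells are indexed by $\dmff(X\pda)=\{W_1,\dots,W_n\}$. These are pairwise inconsistent, and by Lemma~\ref{equiv_lemma}, Lemma~\ref{in_out_lemma} and Lemma~\ref{cut_down_lemma}, $\vdash_{\text{PDL}^\Box}\delta(X\pda)\otto\bigvee_i W_i$. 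Because $X'\vdash\Pr(\delta(X\pda))=1$, the Probability Rules give $X'\vdash\Pr(\bigvee_iW_i)=1$. Write $\mu_i=\sup\{q:X'\vdash\Pr(W_i)\geq q\}$.

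For $\sum_i\mu_i\geq1$, I would argue by contradiction. Suppose $\sum\mu_i<1$. Choose rationals $r_i>\mu_i$ with $\sum r_i<1$, so that $\Pr(W_i)<r_i\in X'$ by maximality. Iterating A7 (or using the dual of the finite-additivity lemma) then yields $\Pr(\bigvee W_i)<1$ in $X'$, contradicting consistency. For $\sum_i\mu_i\leq1$, suppose instead the sum exceeds $1$. Choose rationals $q_i\leq\mu_i$ with $X'\vdash\Pr(W_i)\geq q_i$ and $\sum q_i>1$; this is possible because $\mu_i$ is a sup, and R3 handles the boundary. The additivity lemma just proved then gives $\Pr(\bigvee W_i)\geq\min(1,\ldots)$, while a comparable argument applied to complements via A4 yields inconsistency. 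The delicate point is exactly this interplay between sups and rational bounds, and I expect that is where R3 and the last Probability Rule are needed.
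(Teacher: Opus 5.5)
Your proposal is correct and follows the paper's proof. Finiteness comes from the finiteness of $\can(X)$, and $\sum_i\mu_i\geq 1$ comes by contradiction with $X'\vdash\Pr(\bigvee_i W_i)=1$ via iterated A7. Your strict choice $r_i>\mu_i$ is exactly what A7's strict conjunct needs, and it is cleaner than the paper's $Q_i\geq\sup$.

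You are more careful than the paper, which checks neither disjointness of the $P_W$ nor the leftover cell, and whose argument only rules out $\sum_i\mu_i<1$. Your sketch of the remaining inequality $\sum_i\mu_i\leq 1$ is loose, however. It closes with A4 alone, needing neither R3 nor the sup rule, provided you apply additivity to all but one $W_j$ rather than to all of them.

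Concretely, suppose $X'\vdash\Pr(W_i)\geq q_i$ with $\sum_i q_i>1$; such $q_i$ exist simply because each $\mu_i$ is a supremum. Pick $j$ with $q_j>0$ and let $t=\min(1,\sum_{i\neq j}q_i)$.
\begin{itemize}
\item Additivity on $\{W_i\}_{i\neq j}$, R2 applied to $\bigvee_{i\neq j}W_i\to\neg W_j$, and the first Probability Rule give $\Pr(\neg W_j)\geq t$.
\item By the paper's abbreviations, $\Pr(W_j)\geq q_j$ is, modulo $\neg\neg$, literally $\Pr(\neg W_j)\leq 1-q_j$.
\item Since $1-q_j<t$, A4 then yields $\Pr(\neg W_j)<t$, contradicting the consistency of $X'$.
\end{itemize}
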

	\begin{proof}
		Most pieces are obvious. Of those that aren't, notice that since the set of canonical sections is finite, it's clear that $\Pi$ and $\mu$ meet the finiteness assumptions. To see that $\mu$ is a probability function, first let $\dmff(X\pda)=\{W_1,\dots,W_n\}$ and note that $\sum_{P\in\Pi^\canon_X}\mu(P)=\sum_{i=1}^n\sup\{q: X\vdash\Pr(W_i)\geq q\}$. Also recall from Lemma~\ref{cut_down_lemma} that $\vdash\delta(X\pda)\otto\bigvee_{W\in\dmff(X\pda)} W$. By definition, $X\vdash\Pr(\delta(X\pda))=1$. So $X\vdash\Pr(\bigvee_{i=1}^{n} W_i)=1$. 
		
		Now suppose that $\sum_{i=1}^n\sup\{q: X\vdash\Pr(W_i)\geq q\}<1$. Then there are $Q_1,\dots,Q_n\in[0,1]\cap\mathbb{Q}$ so that $Q_i\geq\sup\{q:X\vdash\Pr(W_i)\geq q\}$ and $\sum_{i=1}^n Q_i<1$. Since $Q_i\geq\sup\{q:X\vdash\Pr(W_i)\geq q\}$ and $X$ is maximal, $X\vdash\Pr(W_i)\leq Q_i$. So by repeated application of A7, we get that $X\vdash\Pr(\bigvee_{i=1}^n W_i)\leq\sum_{i=1}^n Q_i$. But this impossible because $X$ is consistent, $\sum_{i=1}^n Q_i<1$, and $X\vdash\Pr(\bigvee_{i=1}^{n} W_i)=1$. So $\sum_{i=1}^n\sup\{q: X\vdash\Pr(W_i)\geq q\}=1$.
	\end{proof}
		
	\begin{lemma}
		For $A\in \FL^{\pm}(X)$, $M^\canon_\PEDAL,X\pda\vDash A$ iff $X\vdash A$. 
	\end{lemma}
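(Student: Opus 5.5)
The plan is an induction on the construction of $A$ as an $\LL_1$-formula. As in the preceding lemma, I read ``$X\vdash A$'' as ``$X'\vdash A$'', equivalently $A\in X'$, where $X'$ is the fixed maximal consistent extension of $X$. Then negation and disjunction are routine. For $\neg$, the truth clause is $M,X\pda\vDash\neg A$ iff $M,X\pda\not\vDash A$, which by the inductive hypothesis holds iff $A\notin X'$, iff $\neg A\in X'$ by maximality. Disjunction is handled the same way, using the fact that a maximal consistent set contains $A\lor B$ iff it contains $A$ or $B$ (A1 and R1). So the whole weight falls on the atomic case $A=\Pr(\gamma)\geq q$. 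There I must show
\[
\mu^\canon_X\bigl(\Val_\prog(X\pda,\gamma)\bigr)\geq q \quad\text{iff}\quad \Pr(\gamma)\geq q\in X'.
\]

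\textbf{Step 1: compute the semantic probability.} Only the cells $P_W$ with $W\in\dmff(X\pda)$ can carry nonzero weight. The other cells $P_W$ get weight $0$, because $X'\vdash\Pr(\delta(X\pda))=1$ and $W\land\delta(X\pda)$ is inconsistent, and the leftover cell gets $0$ by the sum-to-one computation in the preceding lemma. Each such $P_W$ consists of valuations $v_\sigma$ with $\sigma(X\pda)=W$. By Lemma~\ref{PEDAL_key}, for $\gamma\in\FL^\pm(X)$ we have $M(v_\sigma),X\pda\vDash\gamma$ iff $\gamma\in\sigma(X\pda)=W$. Hence $P_W\subseteq\Val_\prog(X\pda,\gamma)$ when $\gamma\in W$, and $P_W\cap\Val_\prog(X\pda,\gamma)=\emptyset$ otherwise. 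Therefore
\[
\mu(X\pda,\gamma)=\sum_{W\in\dmff(X\pda),\,\gamma\in W}\mu^\canon_X(P_W)=\sum_{W\in\dmff(X\pda),\,\gamma\in W}\sup\{q:X'\vdash\Pr(W)\geq q\}.
\]
The point that must be checked is that distinct $W$ give distinct cells, so nothing is double counted. This holds because $\sigma(X\pda)$ determines the cell.

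\textbf{Step 2: match this with the syntactic value.} By Lemma~\ref{cut_down_lemma}, $\vdash\delta(X\pda)\land\gamma\otto\bigvee\{W\in\dmff(X\pda):\gamma\in W\}$. R2 gives $\Pr(\delta(X\pda))=1$ in $X'$, and the probability rules then give $\Pr(\gamma)\geq s\in X'$ iff $\Pr(\bigvee_{\gamma\in W}W)\geq s\in X'$. Let $Q_W=\sup\{q:X'\vdash\Pr(W)\geq q\}$. The last probability rule puts $\Pr(W)\geq Q_W$ in $X'$ whenever $Q_W$ is rational. I will then show that $\sup\{s:\Pr(\gamma)\geq s\in X'\}=\sum_{\gamma\in W}Q_W$:
\begin{itemize}
\item The lower bound comes from the finite additivity lemma proved just above, applied to rational approximations $r_W\le Q_W$. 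The $W$'s are pairwise inconsistent, so the lemma applies.
\item The upper bound comes from repeated A7 with rational $Q'_W>Q_W$, exactly as in the proof that $\mu^\canon_X$ sums to $1$.
\end{itemize}

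\textbf{Step 3: the boundary case.} Once the supremum $S$ is identified, the direction ``$q<S$ implies membership'' and the direction ``$q>S$ implies non-membership'' are immediate. The main obstacle is $q=S$ exactly. Here I would use R3 together with maximality. If $\Pr(\gamma)\geq q-\frac1k\in X'$ for all large $k$, then R3, applied with antecedent $\top$ or a member of $X'$, yields $\Pr(\gamma)\geq q\in X'$. Conversely, if $\Pr(\gamma)\geq q\in X'$, the upper bound argument shows $q\le S$.

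The subtle point is that $S$ might be irrational, and real-valued sups are exactly what R3 is designed to handle. I expect this to go through because the atomic formulas only mention rational $q$. I also expect to need the fact that Lemma~\ref{PEDAL_key} applies uniformly across all canonical sections.
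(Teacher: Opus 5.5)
Your proposal is correct and follows the paper's proof: induction on $A$ with all the work in the atomic case, where Lemma~\ref{PEDAL_key} makes each cell $P_W$ lie wholly inside or wholly outside $\Val^{S(X)}_\prog(X\pda,\gamma)$, giving $\mu(X\pda,\gamma)=\sum\{\sup\{q:X'\vdash\Pr(W)\geq q\}:\gamma\in W\in\dmff(X\pda)\}$; your Steps 2--3 (finite additivity for the lower bound, A7 for the upper bound, R3 for a rational boundary value) spell out what the paper compresses into ``the argument proceeds like the argument in the previous lemma.'' Two minor slips, neither affecting the argument: $\Pr(\delta(X\pda))=1\in X'$ comes from the definition of $X\pda$ plus closure of probability-one formulas under finite conjunction, not from R2; and the cells $P_W$ are disjoint because $v_\sigma$ determines $\sigma(X\pda)$, which is the converse of the reason you give (by Lemma~\ref{PEDAL_key}, $\sigma(X\pda)$ is exactly the set of members of $\FL^{\pm}(X)$ true at $X\pda$ under $v_\sigma$).
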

	\begin{proof}
		By induction on $A$. Most of the action is in the base case which is the only case we'll examine. 
		
		If $v\in P_W\cap \Val^{S(X)}_\prog(X\pda,\gamma)$ then for some section $\sigma$ with $\sigma(X\pda)=W$, $v=v_\sigma$ and $M^\canon_\PEDAL(X)(v_\sigma),X\pda\vDash\gamma$. But by Lemma~\ref{PEDAL_key}, this happens iff $\gamma\in W=\sigma(X\pda)$. So $P_W\cap \Val^{S(X)}_\prog(X\pda,\gamma)$ is either empty (if $\gamma\not\in W$) or all of $P_W$ (if $\gamma\in W$). By definition, $M^\canon_\PEDAL(X),X\pda\vDash\Pr(\gamma)\geq q$ iff $\mu(\Val^{S(X)}_\prog(X\pda,\gamma))\geq q$. So
		\begin{align*}
			\mu(\Val^{S(X)}_\prog(X\pda,\gamma)) 
			& =\sum_{W\in S(X)}\mu(P_W)\frac{|P_W\cap \Val^{S(X)}_\prog(X\pda,\gamma)|}{|P_W|} \\
			& =\sum\{\mu(P_W):\gamma\in W\in\dmff(X\pda)\} \\
			& = \sum\{\sup\{q: X\vdash\Pr(W)\geq q\}:\gamma\in W\in\dmff(X\pda)\}
		\end{align*}  
		From here the argument proceeds like the argument in the previous lemma. 
	\end{proof}
	
	\begin{theorem}
		PEDAL is complete for PEDAL-models.
	\end{theorem}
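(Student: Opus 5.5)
The plan is the usual contrapositive argument: from an unprovable formula, build a canonical PEDAL-model that falsifies it. Suppose $\not\vdash A$ for some $\LL_1$-formula $A$. Then $\{\neg A\}$ is consistent: if $\neg A\vdash\bot$, the Deduction Theorem gives $\vdash\neg A\to\bot$ and hence $\vdash A$ by propositional logic. Put $X=\{\neg A\}$. This $X$ is finite and consistent, so $\FL(X)$ and $S(X)$ are finite. By the Lindenbaum Lemma, fix a maximal consistent $X'\supseteq X$; the constructions of $X\pda$, $P_W$, $\Pi^\canon_X$ and $\mu^\canon_X$ preceding the theorem are all relative to $X'$.

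Next, I will form $M^\canon_\PEDAL(X)$. By the lemma that this is a PEDAL-model when $X$ is finite and consistent, it is a legitimate model. The truth lemma just proved says that for $B\in\FL^{\pm}(X)$, $M^\canon_\PEDAL(X),X\pda\vDash B$ iff $X\vdash B$, where $\vdash$ there is read relative to the maximal extension $X'$. I will apply it to $B=\neg A$. The point to check is that $\neg A$, and each of its $\LL_1$-subformulas, lies in the domain over which that lemma's induction runs. For this I will read the lemma as an induction over $\LL_1$-subformulas of members of $X$, whose atoms $\Pr(\gamma)\geq q$ all have $\gamma\in X'$-ground formulas and hence in $\FL(X)$. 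Since $X'\vdash\neg A$, we get $M^\canon_\PEDAL(X),X\pda\vDash\neg A$, so $A$ fails at the state $X\pda$ and is not valid. This gives completeness in the weak form $\vDash A\Rightarrow\vdash A$.

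The main obstacle is the base case hidden in the truth lemma's phrase ``the argument proceeds like the previous lemma.'' One must show that $\sum\{\mu(P_W):\gamma\in W\in\dmff(X\pda)\}\geq q$ iff $X'\vdash\Pr(\gamma)\geq q$. I would proceed as follows.
\begin{enumerate}
\item Use Lemma~\ref{cut_down_lemma} together with $X'\vdash\Pr(\delta(X\pda))=1$ and the Probability Rules to get $X'\vdash\Pr(\gamma)\geq s\otto\Pr(\bigvee_{\gamma\in W\in\dmff(X\pda)}W)\geq s$.
\item For the direction from provability to truth, combine the finite-additivity lemma, fed with rational lower bounds below each $\mu(P_W)$, with rule R3. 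This shows the sum of the suprema is at least $q$ whenever $X'\vdash\Pr(\gamma)\geq q$, and in fact that the supremum for $\gamma$ equals the sum.
\item For the converse, use A7 repeatedly, exactly as in the model lemma, to rule out $X'\vdash\Pr(\gamma)<q$ when the sum is $\geq q$. Then invoke maximality of $X'$ and the last Probability Rule, which handles rational suprema, to conclude $\Pr(\gamma)\geq q\in X'$.
\end{enumerate}
Once this base case is secured, the Boolean steps are trivial by maximality, and the theorem follows as above.
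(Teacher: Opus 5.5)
Your outline is the paper's own proof: a Lindenbaum extension of $\{\neg A\}$, the model $M^\canon_\PEDAL(X)$, and the truth lemma at $X\pda$. Your readings of ``$X\vdash$'' as ``$X'\vdash$'', and of the truth lemma as an induction over $\LL_1$-subformulas of $X$, are the right repairs of the paper's notation.

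The genuine gap, which you inherit from the paper, is not the base case. It is the step you take on trust. Canonical sections permute states inside a $\dmff$-class, but $R^\canon_\Box$ depends on boxed formulas with dynamic content. So $v_\sigma$ need not respect $R^\canon_\Box$, and the $\Box$-case of Lemma~\ref{canonical}, which its proof skips, is false.

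Take $X=\{\Pr(\Box\langle\program{p}\rangle a)\geq 1\}$. Then $\FL(X)=\{\Box\langle\program{p}\rangle a,\langle\program{p}\rangle a,a\}$, and the only dmffs in $\FL^{\pm}(X)$ are $a,\neg a$. Put $W_1=\{\Box\langle\program{p}\rangle a,\langle\program{p}\rangle a,a\}$, $W_2=\{\neg\Box\langle\program{p}\rangle a,\langle\program{p}\rangle a,a\}$, $W_4=\{\Box\langle\program{p}\rangle a,\langle\program{p}\rangle a,\neg a\}$ and $W_6=\{\neg\Box\langle\program{p}\rangle a,\neg\langle\program{p}\rangle a,\neg a\}$. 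Let $\sigma$ swap $W_1,W_2$ and fix everything else; this is a canonical section.

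\begin{itemize}
\item Since $W_1\land\langle\program{p}\rangle W_4$ is consistent, $\langle W_2,W_4\rangle\in v_\sigma(\program{p})$. But $W_2\land\Diamond W_4$ is inconsistent, because PDL$^\Box$ proves $\Diamond\Box\varphi\to\Box\varphi$. So $v_\sigma\notin\Val^{S(X)}_\prog$, and part 6 of Lemma~\ref{it_all_works} fails.
\item We have $\langle W_2,W_6\rangle\in R^\canon_\Box$. Because $\neg\langle\program{p}\rangle a\in\sigma(W_6)$, $W_6$ has no $v_\sigma(\program{p})$-successor satisfying $a$. So $\Box\langle\program{p}\rangle a\in\sigma(W_2)$ while $M^\canon_\PEDAL(X)(v_\sigma),W_2\not\vDash\Box\langle\program{p}\rangle a$. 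This refutes Lemma~\ref{PEDAL_key}, which is exactly what makes each $P_W$ all-or-nothing in the truth lemma's base case.
\item The damage reaches the theorem's use of the model. If $\Pr(a)=1\in X'$, then $\mu(P_{W_1})=1$. Yet for every choice of $X\pda$, some $v_\sigma\in P_{W_1}$ falsifies $\Box\langle\program{p}\rangle a$ at $X\pda$: if $X\pda=W_1$, fix $W_1$ and send $W_4$ to $W_6$; otherwise send $X\pda$ to $W_1$ and fix $W_6$. So the construction gives $\mu(X\pda,\Box\langle\program{p}\rangle a)<1$ and does not satisfy $X$.
\end{itemize}

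So neither ``this is a PEDAL-model'' nor the truth lemma can simply be cited. The fixed S5 part, or the class of admissible sections, has to be redesigned first.

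Separately, your base-case sketch attaches the tools to the wrong halves. Let $\Sigma=\sum\{\mu(P_W):\gamma\in W\in\dmff(X\pda)\}$. Feeding the finite-additivity lemma rational lower bounds below each such $\mu(P_W)$ gives $X'\vdash\Pr(\gamma)\geq r$ for all rational $r<\Sigma$, i.e.\ $\sup\{s:X'\vdash\Pr(\gamma)\geq s\}\geq\Sigma$. That is the truth-to-provability half, finished by the rational-supremum rule when $\Sigma=q$.

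Lower bounds on the $\gamma$-cells cannot give your step 2's claim that $\Sigma\geq q$ whenever $X'\vdash\Pr(\gamma)\geq q$. That claim is the upper-bound half, and it is the A7 argument of the model lemma. If $\Sigma<q$, choose rationals $Q_W>\mu(P_W)$ with $\sum Q_W<q$. By maximality $\Pr(W)<Q_W\in X'$, and chaining A5, A7 and your step 1 gives $\Pr(\gamma)<q\in X'$.

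Once the tools are swapped, step 3 is unnecessary. As you state it, it works only as a complement argument: A7 run over the cells not containing $\gamma$, against $\Pr(\delta(X\pda))=1$, with a strict margin below $q$.
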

	\begin{proof}
		Suppose $A$ isn't provable. Then $\neg A$ is consistent. So $\{\neg A\}$ can be extended to a maximal consistent set, from which we can construct a model where $\neg A$ is satisfied by the above results. 
	\end{proof}
	
	\section{Addressing the Infinitary Rule}
	
	I think PEDAL is worth taking seriously in spite of its infinitary nature, both because what it offers (a way to help us to reason well about a situation we ubiquitously find ourselves in where good reasoning is difficult) and because the infinitary rule itself is one we often can, in fact, apply. That said, it would clearly be better to have a finitary proof system.
	
	One way we might do this is to adopt the techniques examined in, e.g. \cite{fattorosi1987modal} among other places and restrict the range of $\mu$ to a finite subset of the rationals between zero and one. The resulting systems can, in the case of other probabilistic logics (see \cite[\S5.2]{ognjanovic2016probability} for another example), be finitely axiomatized. So it seems quite likely that this will also be the case for our system. %In general, these axiomatizations have a single implausible axiom that says, in essence that if $\Pr(\gamma)>\alpha$ where $\alpha$ is one of the values the semantics permits probabilities to take, then $\Pr(\gamma)\geq\alpha^+$, where $\alpha^+$ is `the next' probability that is permitted. 
	
%	This is a problem that is, in practice, either tolerable or easily avoided. It's tolerable to the extent that one has fine-tuned the range of permissible probabilities in such a way that the differences between them really are the sort of thing one can usually ignore. It's avoidable to the extent that one can detect when one isn't being sufficiently sensitive and move to a larger---though still finite---set of probabilities.
	
	The second option is to bound the number of states allowed in our models. In \cite{abadi1994decidability}, it was shown that in the non-dynamic but first-order case this also led to a finitely axiomatized system. %As above, there are axiomatic downsides to this approach, but also as above they can usually be either tolerated or avoided. Should either this approach or the previous approach fail to achieve desired results, there's not a clear reason to avoid seeing what happens if one combines them.
	
	The most speculative solution is to adopt an analogue of the `schematic variable' approaches used in e.g. \cite{read2004identity} and \cite{read2016harmonic}. The loose idea is that one allows assumptions of the form `$x_n<q_n$' to occur as assumptions in proofs, to be discharged when the proof reaches a formula of the form $\Pr(\gamma)\leq x_n\lor B$ with $x_n\not\in B$. After discharge, the next line of the proof is $\Pr(\gamma)\leq q_n$. Stated more briefly but also yet more loosely, one allows the effect of one sort of restricted quantification, without allowing the full power (and decision-theoretic problems) that quantifiers permit.
	
	Each of these is an avenue for future research that, with the system in this paper on the table, is possible to approach. Each seems like it may quite plausibly result in an implementable version of PEDAL. Unfortunately, they will all have to wait for future work.
		
	\section{Conclusion}
	
	Recall the question we began with: supposing Anne is 60\% certain that $\Box(A\to[\program{p}]B)$ is true and 60\% certain that $\Box(C\to[\program{q}]D)$ is true, how confident should she be that $\Box[\program{(?A;p)\cup(?C;q)}](B\lor D)$ is true? We can now give a rigorous answer to this question. 
	
	First let's say that a confidence lower bound (clb) is any $q$ for which PEDAL will derive $\Pr(\Box[\program{(?A;p)\cup(?C;q)}](B\lor D))\geq q$ from $(\Pr(\Box(A\to[\program{p}]B))\geq 0.6$ and $\Pr(\Box(C\to[\program{q}]D))\geq0.6)$. Given that PEDAL-models accurately model Anne's epistemic state, she should be at least $q$-confident that $\Box[\program{(?A;p)\cup(?C;q)}](B\lor D)$ is true for any clb $q$. And she should be \emph{no more than} $Q$ confident that $\Box[\program{(?A;p)\cup(?C;q)}](B\lor D)$ where $Q$ is the supremum of all the clbs $q$.
	
	Next note that (while the derivation is nontrivial) it is in fact the case that if $\vdash_{\text{PDL}^\Box}A\to B$, then $\vdash\Pr(A)\geq q\to\Pr(B)\geq q$ for all appropriate $q$. The reader attempting to reconstruct the precise derivations we now gesture at will need to use this fact often.
	
	To begin, notice that $\vdash \Pr(\gamma)\geq 0.6\to\Pr(\neg\gamma)\leq 0.4$. Using this fact twice, Axiom A7 and some elbow grease gives that 0.2 is a clb for Anne. 
	
	To see that 0.2 is the supremum of Anne's clbs, we give a model where both $(\Pr(\Box(A\to[\program{p}]B))\geq 0.6$ and $\Pr(\Box(C\to[\program{q}]D))\geq0.6)$ are true and so is $\Pr(\Box[\program{(?A;p)\cup(?C;q)}](B\lor D))=0.2$. By soundness, it follows that no higher bound can be proved. To make the display of such a model simpler, we suppose that $A$, $B$, $C$, and $D$ are atomic. We use a model with 2 states, $s$ and $t$ and use universal accessibility relation for $R_\Box$. Let $A$ and $C$ be true in $s$ but not $t$ and $B$ and $D$ be true in $t$ but not $s$. Suppose $v_1$, $v_2$, and $v_3$ are the following valuations:
	\begin{displaymath}
		v_1:=
		\left(
			\xymatrix@=3mm{
			s\ar[r]^{\program{p}}\ar@(ul,dl)[]_{\program{q}} & t \\
			}
		\right)
		\qquad
		v_2:=
		\left(
			\xymatrix@=3mm{
			s\ar@/^/[r]^{\program{p}}\ar@/_/_{\program{q}}[r] & t \\
			}
		\right)
		\qquad
		v_3:=
		\left(
			\xymatrix@=3mm{
			s\ar[r]^{\program{q}}\ar@(ul,dl)[]_{\program{p}} & t \\
			}
		\right)
	\end{displaymath}
	Notice that in $v_1$ and in $v_2$, both states make $(A\to[\program{p}]B)$ true. Also notice that in $v_2$ and $v_3$ both states make $(C\to[\program{q}]D)$ true. But in $v_1$ we can execute $\program{q}$ at the $C$-state $s$, and doing so lands us back in $s$, which is not a $D$-state. Similarly, in $v_3$ we can execute $\program{p}$ at the $A$-state $s$, and doing so lands us back in $s$, which is not a $B$-state. It follows that in $v_1$ and in $v_2$ (but not in $v_3$), $\Box(A\to[\program{p}]B)$ is true and that in $v_2$ and in $v_3$ (but not in $v_1$), $\Box(C\to[\program{q}]D)$ is true. 
	
	Finally, observe that in $v_1$ and $v_3$, there is an execution of $\program{(?A;p)\cup(?C;q)}$ at $s$ that returns to the non-$B\lor D$-state $s$. In $v_1$, it's the execution that tests whether $C$, then executes $\program{q}$; in $v_3$ it's the execution that tests whether $A$, then executes $\program{p}$. So it's \emph{only} in $v_2$ that $[\program{(?A;p)\cup(?C;q)}](B\lor D)$ is true at both states.
	
	Now let $\Pi$ be the partition $\{\{v_1\},\{v_2\},\{v_3\},\{\text{everything else}\}\}$ and let $\mu$ assign 40\% probability to the cells $\{v_1\}$ and $\{v_2\}$, 20\% probability to the cell $\{v_3\}$ and 0 probability to the remaining cell. On the resulting PEDAL-model, both states make $\Pr(\Box(A\to[\program{p}]B))=0.6$  true, and make $\Pr(\Box(C\to[\program{q}]D))=0.6$ true, and also make $\Pr(\Box[\program{(?A;p)\cup(?C;q)}](B\lor D))=0.2$ true. 

	It follows that 0.2 is the supremum of Anne's clbs. As a result, should Anne think that her credences concerning the behavior of $\program{p}$ and $\program{q}$ ought to warrant her greater confidence than this, she owes an explanation of what, precisely, justifies this additional credence since no credence greater than 0.2 is warranted by the judgments she has expressed. 

\begin{credits}
\subsubsection{\ackname} Conversations with Will Stafford and Bruce Glymour greatly improved this paper. 

\subsubsection{\discintname}
No competing interests to declare
\end{credits}
%
% ---- Bibliography ----
%

\bibliographystyle{splncs04}
\bibliography{biblio}

\end{document}